\documentclass[10pt,journal]{IEEEtran}
\usepackage{subfigure}
\usepackage{tabulary}
\usepackage{footnote}
\usepackage{optidef}
\usepackage[export]{adjustbox}
\usepackage{colonequals}
\usepackage{comment}
\usepackage{cases}
\usepackage{multicol}
\usepackage[justification=centering]{caption}
\graphicspath{{./matlab/figures/}{./}{figures/}}

\usepackage{lipsum}  
\usepackage{amsfonts}
\usepackage{mathtools}
\usepackage{bm}
\usepackage{amsmath,amssymb}
\usepackage{algorithm}
\usepackage{algpseudocode}
\usepackage{amsthm}
\usepackage{booktabs}
\usepackage{tabularx}
\usepackage{multirow}
\usepackage{graphicx}
\usepackage{xcolor}
\usepackage{cite}
\usepackage{hyperref}
\usepackage{tikz}
\usetikzlibrary{arrows.meta,positioning,fit,calc,shapes.geometric,shapes.misc,decorations.pathmorphing,backgrounds,matrix}
\newcommand{\safeincludegraphics}[2][]{\IfFileExists{#2}{\includegraphics[#1]{#2}}{\fbox{\parbox[c][0.28\textheight][c]{0.94\textwidth}{\centering Figure file not found: \texttt{\detokenize{#2}}}}}}
\hypersetup{hidelinks}

\DeclareMathOperator*{\argmax}{argmax}

\DeclareMathOperator{\Tr}{Tr}
\DeclareMathOperator{\rank}{rank}
\DeclareMathOperator{\diag}{diag}

\newtheorem{lemma}{Lemma}
\newtheorem{proposition}{Proposition}
\newtheorem{theorem}{Theorem}
\newtheorem{corollary}{Corollary}

\newcommand{\C}{\mathbb{C}}
\newcommand{\R}{\mathbb{R}}
\newcommand{\E}{\mathbb{E}}
\newcommand{\CN}{\mathcal{CN}}
\newcommand{\cM}{\mathcal{M}}
\newcommand{\cE}{\mathcal{E}}
\newcommand{\cL}{\mathcal{L}}
\newcommand{\cS}{\mathcal{S}}

\newcommand{\cB}{\mathcal{B}}

\newcommand{\bH}{\mathbf{H}}
\newcommand{\bF}{\mathbf{F}}
\newcommand{\bA}{\mathbf{A}}

\newcommand{\bR}{\mathbf{R}}
\newcommand{\bQ}{\mathbf{Q}}
\newcommand{\bJ}{\mathbf{J}}

\newcommand{\bC}{\mathbf{C}}
\newcommand{\bI}{\mathbf{I}}
\newcommand{\bzero}{\mathbf{0}}
\newcommand{\bx}{\mathbf{x}}
\newcommand{\bz}{\mathbf{z}}
\newcommand{\bv}{\mathbf{v}}
\newcommand{\bu}{\mathbf{u}}
\newcommand{\br}{\mathbf{r}}
\newcommand{\bs}{\mathbf{s}}
\newcommand{\be}{\mathbf{e}}
\newcommand{\by}{\mathbf{y}}

\newcommand{\bw}{\mathbf{w}}
\newcommand{\bq}{\mathbf{q}}
\newcommand{\bo}{\mathbf{o}}

\newcommand{\bdelta}{\boldsymbol{\delta}}
\newcommand{\bPhi}{\boldsymbol{\Phi}}
\newcommand{\bOmega}{\boldsymbol{\Omega}}

\title{Logit-Aware MIMO AirComp for Distributed Mixture-of-Experts LLM Inference over Wireless Edge Networks}

\author{\text{Lyutianyang Zhang},~\IEEEmembership{Member,~IEEE},
\text{Yunjian Jia}, \text{Liu Cao},~\IEEEmembership{Member,~IEEE}, \text{Dengke Wang}, \text{Jinke Ren},~\IEEEmembership{Member,~IEEE}, and \text{Shuguang Cui, }\IEEEmembership{Fellow, IEEE}
\thanks{Lyutianyang Zhang, Yunjian Jia, and Dengke Wang are with the School of Microelectronics and Communication Engineering, Chongqing University, Chongqing, China (email: \{zhanglyutianyang, yunjian, dengke.wang\}@cqu.edu.cn). Liu Cao is with City University of Hong Kong (Dongguan), Dongguan, Guangdong, China (email: liu.cao@cityu-dg.edu.cn). Jinke Ren and Shuguang Cui are with the School of Science and Engineering, the Shenzhen Future Network of Intelligence Institute, and the Guangdong Provincial Key Laboratory of Future Networks of Intelligence, the Chinese University of Hong Kong (Shenzhen), Shenzhen, Guangdong 518172, China (e-mail: \{jinkeren, shuguangcui\}@cuhk.edu.cn). (Corresponding author: Yunjian Jia).}}

\begin{document}
\maketitle

\begin{abstract}
Distributed mixture-of-experts (MoE) inference is a promising architecture for deploying large language models (LLMs) at wireless edge networks because sparse experts can be placed across coordinated base stations (BSs), while the anchor node and user equipment (UE) can offload LLM inference tasks to BSs. The communication bottleneck is the MoE aggregation where the anchor BS must recover a weighted sum of selected expert outputs before each decoding step. Over-the-air computation (AirComp) is well matched to this operation because the wireless multiple-access channel naturally superposes simultaneous transmissions. However, conventional AirComp minimizes communication distortion, whereas MoE aggregation errors are not equally important to the LLM. The same hidden-state error can cause different next-token logit changes depending on the layer, token, and etc. In this paper, we propose a logit-aware multiple-input and multiple-output (MIMO) AirComp framework that estimates local logit sensitivity as block-level weights and jointly optimizes receive combiners and BS precoders under per-BS power constraints. We propose an alternating algorithm that protects aggregation from perturbation to the decoding decision. In the simulation, we design a wireless transceiver simulator for distributed MoE LLM inference. OpenCompass is utilized to evaluate the performance of Qwen3-30B-A3B-Instruct-2507-FP8 under GSM8K and ARC-Challenge datasets. At 30~dB aggregation SNR in the controlled replay, perturbed Qwen3 inference retains $99.3\%$ of the clean GSM8K accuracy and $94.8\%$ of the clean ARC-Challenge accuracy. In a direct ARC-Challenge closed-loop audit over 295 examples, the proposed sensitivity weight (SW)-AirComp algorithm achieves $43.73\%$ accuracy at 30~dB and $26.78\%$ at 20~dB, outperforming unweighted, matched-filter, and zero-forcing AirComp at both signal-to-noise ratios (SNRs). In the wireless simulator, SW-AirComp reduces decision-relevant aggregation distortion relative to unweighted, matched-filter, and zero-forcing baselines; in the 20~dB fixed-weight convergence diagnostic, its final weighted-sum mean-squared error (WSMSE) is $53.5\%$ lower than unweighted AirComp under the same channels, samples, and sensitivity weights. Logit-RMSE and logit-gap diagnostics further show that the gain comes from reducing output-logit perturbation and lowering the risk of top-token changes, not merely from lowering average hidden-state error.
\end{abstract}

\begin{IEEEkeywords}
Distributed LLM Inference, Mixture-of-Experts, Logit-aware MIMO AirComp, Wireless Edge Intelligence.
\end{IEEEkeywords}

\section{Introduction}

Transformer-based large language models (LLMs) have become the dominant backbone for language understanding, reasoning, and content generation~\cite{vaswani2017attention}. Moving these models from remote clouds to wireless edge networks is attractive because it can reduce response latency, keep user data closer to the access network, and exploit the storage and computation resources already deployed across base stations (BSs). The difficulty is that modern LLM inference is both memory intensive and delay sensitive. During autoregressive decoding, the model must finish the computation for one token before the next token can begin, while the key-value (KV) cache and intermediate activations continue to grow with context length. Therefore, a single edge node becomes a fragile serving point for LLMs.

Sparse mixed-of-experts (MoE) models solve this deployment problem in a useful way. The sparsely gated MoE layer activates only a small subset of experts for each token~\cite{shazeer2017moe}. GShard showed that conditional computation and automatic sharding can scale LLMs across multiple devices~\cite{lepikhin2020gshard}. Switch Transformer further demonstrated that simplified sparse routing can support trillion-parameter-scale models with manageable per-token computation~\cite{fedus2022switch}. These properties make MoE inference a natural fit for coordinated wireless edge clusters: experts can be stored at different BSs, while an anchor BS keeps the router, output head, KV cache, and decoding state. For each MoE layer, the selected BSs compute their weighted expert outputs, and the anchor BS needs only the sum of expert outputs before the next layer can proceed.

This observation exposes the main communication primitive in distributed MoE inference: the anchor BS does not need to decode independent messages; it needs a weighted sum of selected expert activations. Over-the-air computation (AirComp) is well matched to this primitive because its simultaneous transmissions over a wireless multiple-access channel are naturally superposed at the receiver~\cite{weijie}\cite{zhaohui}. In principle, the wireless channel can transform many expert-output transmissions into one analog aggregation step, reducing the token-level communication burden of distributed MoE inference. However, directly applying conventional AirComp to MoE inference misses a central feature of LLM computation. The aggregated hidden vector is not the final answer; it is an intermediate state that will pass through later attention, normalization, experts, and output-head layers. As a result, two hidden-state errors with the same Euclidean energy can have very different effects on next-token logits. An error in a sensitive direction can change the top-token decision, whereas a larger error in an insensitive direction may be absorbed by the remaining network.

The problem studied in this paper is therefore not ordinary AirComp MSE minimization. 
We consider logit-aware MIMO AirComp for a pretrained MoE model, where the receiver should recover sum of expert outputs in a way that preserves the next-token logits \footnote{A logit is the unnormalized score assigned by the LLM output head to each vocabulary token before normalization and token selection.}. The wireless design must handle channel fading, multi-antenna precoding, resource-block coupling, and per-BS power budgets, while the model side supplies a nonuniform sensitivity profile over aggregation blocks. This leads to the following design question: \emph{how should communication and computation resources be allocated when the important quantity is not the average hidden-state error, but the downstream logit distortion caused by that error?}

The key intuition is an error-budget mismatch between conventional AirComp and LLM inference. A conventional AirComp receiver treats the hidden state aggregation mainly as a communication signal and therefore attempts to reduce its average reconstruction error. A decoder, however, only observes this error after it has been transformed by the remaining pretrained layers and the output head. Some hidden states are almost neutral for the next-token distribution, while others move probability mass across competing vocabulary tokens. The relevant decision quantity is the gap between the largest and second-largest clean logits: if the aggregation-induced logit perturbation exceeds this gap, the selected token can change. A logit-aware AirComp design should therefore protect aggregation components with large downstream logit sensitivity rather than allocating equal power and spatial degrees of freedom on all hidden coordinates.

\subsection{Literature Review}
Wireless edge intelligence has established that learning performance and radio resource allocation are tightly coupled. A survey of distributed learning over wireless networks identified communication efficiency as a key bottleneck for wireless edge intelligence~\cite{chen2021distributed}. Wireless federated learning (FL) over resource-constrained networks has been compared under digital and analog transmission schemes~\cite{yao2024wirelessFL}. The network performance improved by the edge intelligence can be significantly affected by the wireless network features such as channel model \cite{zhang1} and network topology \cite{zhang2}. These works mainly address generic edge learning, where the transmitted objects are gradients, model updates, or compressed features. Distributed LLM inference is different: the model is fixed, the loop is autoregressive, and the transmitted object is an internal hidden activation whose importance is judged by the next generated token.

AirComp provides the physical-layer foundation for low-latency aggregation. Massive-IoT AirComp clarified over-the-air aggregation as a wireless computing primitive for dense networks~\cite{zhu2021massiveiot}. Power control for AirComp over fading channels was optimized to reduce aggregation distortion under channel uncertainty~\cite{cao2020powercontrol}. Optimization and scaling laws further characterized how AirComp behaves as the system dimension and device population grow~\cite{liu2020scaling}. These studies minimize aggregation distortion in the communication domain, which is necessary but not sufficient for aggregating LLM hidden state.

AirComp has also been widely used for FL. Over-the-air FL aggregation was proposed to reduce communication latency during distributed training~\cite{yang2020aircomp}. Broadband analog aggregation was designed for low-latency FL~\cite{zhu2020broadband}. FL over wireless fading channels has been analyzed to connect channel impairment with training behavior~\cite{amiri2020fading}. One-bit over-the-air aggregation was developed with convergence analysis~\cite{zhu2021onebit}. Compressive sensing has been used to reduce the dimension of massive-MIMO federated updates~\cite{jeon2021compressive}. Misaligned AirComp was addressed for FL under synchronization mismatch~\cite{shao2022misaligned}. Massive digital AirComp was studied for communication-efficient FL~\cite{qiao2024massive}. These works primarily support training or model-update aggregation; none of them models how an aggregation error propagates through the remaining layers of a pretrained LLM.

MIMO and programmable propagation environments improve AirComp reliability for high-dimensional signals, while multicarrier resource blocks provide a practical implementation method for broadband channels. Hybrid beamforming has been designed for massive-MIMO AirComp~\cite{zhai2021hybrid}. Transceiver beamforming has been studied for AirComp in massive-MIMO systems~\cite{jing2023transceiver}. Intelligent reflecting surfaces have been used to reshape the AirComp channel~\cite{jiang2021irs}. Reconfigurable intelligent surfaces have also been considered for improving AirComp reliability~\cite{fang2021ris}. Bayesian AirComp introduced statistical prior information into the aggregation design~\cite{shao2023bayesian}. Task-oriented AirComp was proposed for multi-device edge intelligence~\cite{wen2024taskaircomp}. These works are close in spirit because they go beyond scalar aggregation, but their task models do not describe distributed MoE expert summation or next-token logit sensitivity.

Wireless LLM inference is now receiving increasing attention \cite{zhang2025edgellm,lyu2026larger}. For example, digital and analog wireless FL aggregation under resource constraints are developed for LLMs~\cite{yao2024wirelessFL}. Physical-layer digital-twin models are developed from an ISAC perspective~\cite{cui2023digitalTwinISAC}. LLM-MoE transmission is studied for satellite networks~\cite{zhang2024satellite}. Recent inference and split-learning studies further address MIMO AirComp and wireless split learning~\cite{yang2025mimoInference,kim2024splitmac}. These studies motivate LLM over wireless edge networks, but they do not formulate physical (PHY)-layer aggregation of distributed MoE expert activations as a logit-aware AirComp problem.

\subsection{Main Contributions}
The main contributions are summarized as follows.
\begin{itemize}
    \item We directly connect wireless communication design with LLM inference by deriving logit sensitivity of MoE aggregation errors and embedding them into a MIMO AirComp design under per-BS power constraints. The resulting logit-weighted objective targets next-token preservation rather than average hidden-state reconstruction error, while the frequency-domain resource-block model provides the implementation framework.

    \item We develop a distributed MoE inference model for coordinated BS clusters, where selected expert outputs are aggregated through MIMO AirComp. A Jacobian-based perturbation analysis explains why the aggregation errors with the same hidden-state can have different effects on the next-token logits and inference decisions.

    \item We formulate and analyze the logit-sensitivity-weighted transceiver problem, deriving the AirComp MSE decomposition, logit-distortion bounds, token-preservation conditions, convex block subproblems, Karush-Kuhn-Tucker (KKT)-based closed-form updates, and monotonic convergence of the fixed-weight inner loop.

    \item We evaluate the framework through Qwen3 perturbation replay with OpenCompass dataset, a direct closed-loop ARC-Challenge audit, and a wireless simulator over 16 aggregation blocks. The proposed method is compared with unweighted, matched-filter, and zero-forcing AirComp using task accuracy, WSMSE, logit-RMSE, and logit-gap diagnostics.
\end{itemize}

\begin{figure*}[t]
\safeincludegraphics[height=0.61\textwidth,width=0.99\textwidth]{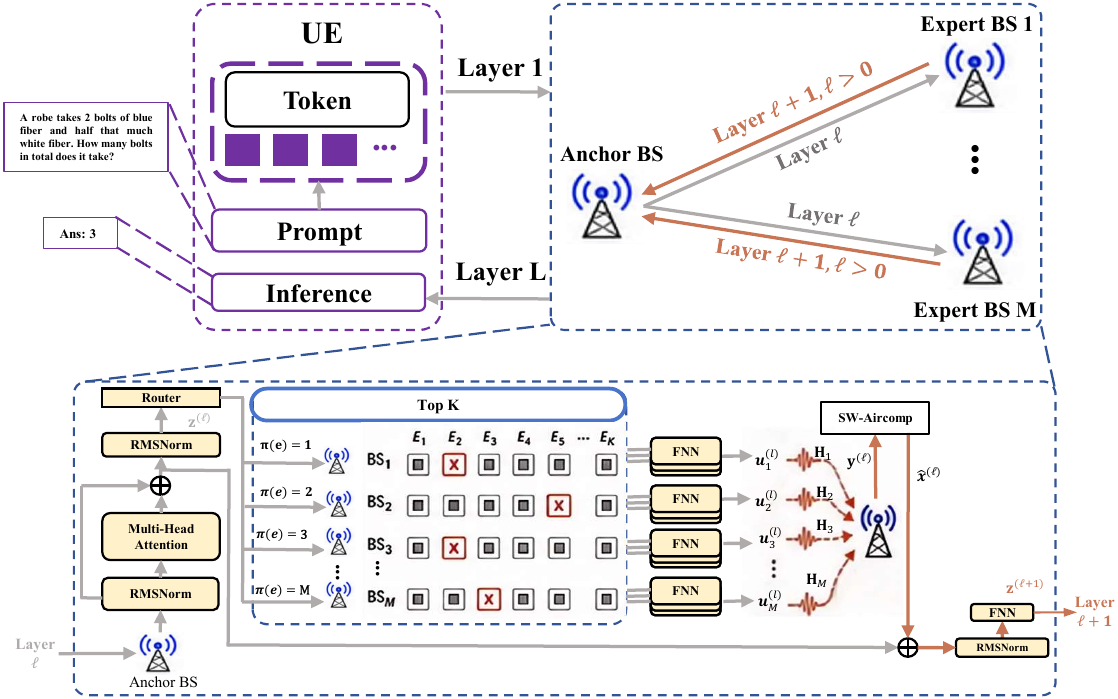}
        \centering
    \caption{System architecture of distributed MoE inference.}
    \label{fig:token-layer-flow}
\end{figure*}

The paper is organized as follows. Section~II presents the distributed MoE inference model and the MIMO AirComp signal model. Section~III develops the next-token logit-sensitivity model, derives the chunk-level sensitivity reduction, formulates the weighted transceiver problem, analyzes its convex block-coordinate structure, and presents the closed-loop algorithm. Section~IV evaluates the proposed method using Qwen3/OpenCompass replay, a direct ARC-Challenge closed-loop audit, and wireless sensitivity diagnostics. Section~V concludes the paper.

Notation: Bold lowercase letters denote vectors and bold uppercase letters denote matrices. The operators $(\cdot)^T$, $(\cdot)^H$, and $(\cdot)^*$ denote transpose, Hermitian transpose, and elementwise conjugate, respectively. The trace, rank, and diagonal operators are denoted by $\Tr(\cdot)$, $\rank(\cdot)$, and $\diag(\cdot)$. The expectation operator is $\E[\cdot]$. The notation $\mathbf X\succeq\bzero$ means that $\mathbf X$ is positive semidefinite. The identity matrix is $\bI$, and an all-zero vector or matrix is $\bzero$ with dimension clear from context. For complex vectors, $\Re\{\cdot\}$ and $\Im\{\cdot\}$ denote real and imaginary parts.

\section{System Model}\label{sec:system-model}

This section develops the system model for token-level
MoE aggregation at a given layer and decoding step within a
coordinated wireless edge network. We first describe the distributed
MoE inference architecture and expert placement in wireless communication network, and then formulate the corresponding MIMO AirComp model.

\subsection{Distributed MoE Inference Architecture}
\label{subsec:distributed-moe-aircomp-model}

We consider a pretrained LLM with $L$ MoE layers indexed by
$\ell\in\{1,\ldots,L\}$. The inference service is deployed over one
anchor BS and $M$ synchronized expert BSs. The anchor BS maintains the
decoding state, router, output head, and KV cache and executes the
non-expert model components, whereas the expert BSs store the subsets of
the pretrained MoE experts. Each expert BS is equipped with
$N_{\rm T}$ transmit antennas, while the anchor BS is equipped with
$N_{\rm R}$ receive antennas.

As illustrated in Fig.~\ref{fig:token-layer-flow}, a UE sends a prompt
to the anchor BS, where it is tokenized and processed autoregressively
by the LLM. At MoE layer $\ell$, the router at the anchor BS selects the
top-$K$ experts and dispatches the intermediate hidden state, together
with the corresponding routing weights, to the expert BSs hosting the
selected experts. Each active BS evaluates the selected expert or
experts stored locally and forms its router-weighted local contribution
by summing their weighted outputs. Because the subsequent LLM
computation requires only the aggregate of these local contributions,
the active BSs transmit them simultaneously over the wireless
multiple-access channel. By exploiting its natural signal
superposition, MIMO AirComp enables the anchor BS to recover the
weighted sum of the selected expert outputs without separately
decoding the contribution of each BS. The anchor BS then proceeds with
the remaining model computation. This procedure is repeated across the
$L$ MoE layers and throughout the autoregressive decoding steps. After
the final output token is generated, the anchor BS returns the
inference result to the UE. In the following, we formulate this
distributed MoE inference procedure as a frequency-domain MIMO AirComp
aggregation problem over wireless resource blocks.

\subsection{MIMO AirComp Model}

At MoE layer $\ell \in \mathcal{L}$, let $\bz^{(\ell)}\in\R^d$ be the hidden vector of the MoE block. Expert $e\in\cE^{(\ell)}$ computes
\begin{equation}
    \bv_{e}^{(\ell)}
    =
    E_e^{(\ell)}\!\left(\bz^{(\ell)}\right)
    \in\R^d,
    \label{eq:expert-output}
\end{equation}
where $E_e^{(\ell)}(\cdot)$ is a fixed pretrained expert network. The router selects a sparse expert subset $\cS^{(\ell)}\subseteq\cE^{(\ell)}$
and assigns weights $\{g_{e}^{(\ell)}\}_{e\in\cS^{(\ell)}}$ satisfying
$g_{e}^{(\ell)}\ge0$ and
$\sum_{e\in\cS^{(\ell)}}g_{e}^{(\ell)}=1$. Expert placement is defined by $\pi:\bigcup_{\ell\in\cL}\cE^{(\ell)}\rightarrow \cM$, where $\pi(e)=m$ means that BS $m$ contains the expert $e$; otherwise, $\pi(e) = 0$. The weighted local MoE contribution available at BS $m$ is $\bu_{m}^{(\ell)}=\sum_{e\in\cS^{(\ell)}:\,\pi(e)=m}g_{e}^{(\ell)}\bv_{e}^{(\ell)}\in\R^d$, with $\bu_{m}^{(\ell)}=\bzero$ if BS $m$ hosts no selected expert. The active BS set is defined as
\begin{equation}
    \cM^{(\ell)}
    =
    \left\{
    m\in\cM:
    \exists e\in\cS^{(\ell)}\ \text{such that}\ \pi(e)=m
    \right\}.
    \label{eq:active-bs-set}
\end{equation}
The ideal MoE aggregation output is $\bx^{(\ell+1)}=\sum_{m\in\cM^{(\ell)}}\bu_{m}^{(\ell)}$ and thus the AirComp target in the wireless aggregation design. Suppose that the channel from BS $m$ to the anchor BS can be modeled as a clustered frequency-domain MIMO channel. The channel matrix is given by
\begin{align}
    \bH_{m}
    =
    \sqrt{\frac{N_{\rm T}N_{\rm R}}{L_m}}
    \sum_{p=1}^{L_m}
    \alpha_{m,p}(\theta_{m,p}^{\rm T},\theta_{m,p}^{\rm R})
    \mathbf{a}_{R}(\theta_{m,p}^{\rm R})
    \mathbf{a}_{t}^{H}(\theta_{m,p}^{\rm T}),
    \label{eq:sim-channel-model}
\end{align}
where $L_m$ is the number of channel paths, $\alpha_{m,p}(\theta_{m,p}^{\rm T},\theta_{m,p}^{\rm R})$ is the path gain, $\theta_{m,p}^{\rm T}$ and $\theta_{m,p}^{\rm R}$ are the angle of departure (AoD) and angle of arrival (AoA), respectively. We assume uniform linear arrays (ULA), the steering vectors are defined as
\begin{align}
\mathbf{a}(\theta,N)
&=
\frac{1}{\sqrt{N_{\rm T}}}
\left[
1,
e^{j2\pi \frac{d}{\lambda}\sin(\theta)},
\ldots,
e^{j2\pi \frac{(N-1)d}{\lambda}\sin(\theta)}
\right]^T, 
\end{align}
where $d$ is the antenna spacing and $\lambda$ is the carrier wavelength. We assume quasi-static block fading within each MoE layer $\ell$, i.e., $\bH_{m,b}$ remains constant across information block $b$. The transmitted information is first mapped into a complex-domain representation through a linear and invertible packing operator $\mathcal{P}$. Specifically, BS $m$ maps its local MoE output $\mathbf{u}_{m}^{(\ell)}$ into
\begin{equation}
    \mathbf{s}_{m}^{(\ell)}
    =
    \mathcal{P}\left(\mathbf{u}_{m}^{(\ell)}\right),
\end{equation}
where $\mathbf{s}_{m,b}^{(\ell)}$ denotes the $b$-th information block of $\mathbf{s}_{m}^{(\ell)}$. Since $\mathcal{P}$ is linear, the desired block-wise AirComp target satisfies
\begin{equation}
    \sum_{m\in\mathcal{M}^{(\ell)}} \mathbf{s}_{m,b}^{(\ell)}
    =
    \left[
    \mathcal{P}\left(
    \sum_{m\in\mathcal{M}^{(\ell)}} \mathbf{u}_{m}^{(\ell)}
    \right)
    \right]_b ,
\end{equation}
where $b \in \mathcal{B}_{t,l}$ is the information block set at decoding step $t$ and MoE layer $l$. 
Therefore, the anchor BS is able to firstly estimate the summed complex-domain block $\sum_{m\in\mathcal{M}^{(\ell)}} \mathbf{s}_{m,b}^{(\ell)}$ through AirComp, and then applies $\mathcal{P}^{-1}$ after collecting all information blocks to recover the desired model-domain aggregation output $\mathbf{x}^{(\ell+1)}=\sum_{m\in\mathcal{M}^{(\ell)}}\mathbf{u}_{m}^{(\ell)}$.

During the wireless aggregation phase, each synchronized BS applies a linear precoder $\bF_m \in \C^{N_{\rm T}\times b}$ to its local MoE output $\bu_m^{(\ell)}$, and transmits $\bx_m^{(\ell)} = \bF_m \bs_m^{(\ell)}$. The received signal at the anchor BS is then given by the over-the-air superposition
\begin{equation}
    \by^{(\ell)}_{b}
    =
    \sum_{m\in\cM^{(\ell)}}
    \bH_{m,b} \bF_{m,b} \bs_{m,b}^{(\ell)}
    +
    \bw^{(\ell)},
    \label{eq:aircomp-received}
\end{equation}
where $\bw^{(\ell)} \sim \CN(\bzero,\sigma^2 \bI_{N_{\rm R}})$ denotes additive receiver noise. To recover the desired MoE aggregation output, the anchor BS applies a linear AirComp combiner $\bA^{(\ell)} \in \C^{N_{\rm R}\times d}$, producing
\begin{equation}
    \widehat{\bs}_{b}^{(\ell+1)}
    =
    \sum_{m\in\cM^{(\ell)}}
    \bA_{b}^{(\ell)H}\bH_{m,b} \bF_{m,b} \bs_{m,b}^{(\ell)}
    +
    \bA_{b}^{(\ell)H}\bw_{m,b}^{(\ell)},
    \label{eq:effective-aggregation}
\end{equation}
where $\tilde{\bw}_{m,b}^{(\ell)} =\bA_{b}^{(\ell)H}\bw_{{m,b}}^{(\ell)}$ is the effective noise. The ideal MoE aggregation $\bx^{(\ell+1)}=\sum_{m\in\cM^{(\ell)}}\bu_{m}^{(\ell)}$ is achieved if the following AirComp alignment condition holds:
$\bA_{b}^{(\ell)H}\bH_{m,b} \bF_{m,b} \approx \bI_d$, $\forall m \in \cM^{(\ell)}$. In this case, the received signal simplifies to
$\widehat{\bx}^{(\ell+1)}= \sum_{m}\widehat{\bu}_{m}^{(\ell+1)} = \mathcal{P}^{-1}\Big(\widehat{\bs}_{b}^{(\ell+1)}\Big)$. Hence, the resulting mismatch between the ideal MoE aggregation output and the AirComp reconstruction at decoding step $t$ is defined as $\bdelta_{t}^{(\ell)} =\widehat{\bx}_{t}^{(\ell+1)} - \bx_{t}^{(\ell+1)}$, the minimization of which is equivalent to the minimization of the information block difference, i.e., $\hat{\bs}^{(\ell)}_{b}-\bs^{(\ell)}_{b}$, where $\bs^{(\ell)}_{b} = \sum_{m\in\cM^{(\ell)}} \bs_{m,b}^{(\ell)}$. 

The transceiver design problem is therefore to jointly optimize the precoders $\{\bF_{m,b}\}$ and combiners $\bA_{b}^{(\ell)}$ such that the AirComp alignment condition $\bA_{b}^{(\ell)H}\bH_{m,b} \bF_{m,b} \approx \bI_d$, $\forall m \in \cM^{(\ell)}$ is approximately satisfied under the per-BS power constraint, which are formalized in the next section.

\section{Problem Formulation and Algorithm Design}
\label{sec:theory-algorithm}
This section develops a logit-aware AirComp design for sparse MoE aggregation over wireless edge networks. We first relate the wireless aggregation error to the next-token logit perturbation through the local logit Jacobian, which yields a block-level sensitivity weight that quantifies the average impact of each AirComp block on the LLM output. Using these weights, we formulate a logit-sensitivity-weighted transceiver optimization problem under per-BS total power constraints. Although the resulting problem is not convex, its receive-combiner and transmit-precoder subproblems admit exact block updates with standard convex structure. Finally, we present a compact closed-loop algorithm that estimates the sensitivity weights by finite differences, alternates between the closed-form receive and transmit updates, and yields a monotonic block-coordinate procedure for each fixed-weight AirComp objective.

\subsection{LLM-Sensitive Objective}
\label{subsec:llm-sensitive-objective}
At decoding step $t$, the LLM produces the logit vector $\bo_t=[o_{t,1},o_{t,2},\ldots,o_{t,V}]^T\in\R^V$. The clean next token can be obtained as $\argmax_{j\in\{1,\ldots,V\}}o_{t,j}$. Let $\bPhi_{\ell,t}:\R^d\rightarrow\R^V$ denote the mapping from the MoE
aggregation output at layer $\ell$ and decoding step $t$ to the next-token
logits. Under ideal and AirComp aggregation, the logit vectors are
\begin{align}
    \bo_t
    =
    \bPhi_{\ell,t}
    \!\left(
    \bx_t^{(\ell+1)}
    \right),~~~~~
    \widehat{\bo}_t
    =
    \bPhi_{\ell,t}
    \!\left(
    \hat{\bx}_t^{(\ell+1)}
    \right). \nonumber
\end{align}
The local logit Jacobian is
\begin{equation}
    \bJ_{\ell,t}
    =
    \frac{\partial\bPhi_{\ell,t}(\bx)}{\partial\bx}
    \bigg|_{\bx=\bx_t^{(\ell+1)}}
    \in\R^{V\times d}.
    \label{eq:logit-jacobian}
\end{equation}
For sufficiently small $\bdelta_t^{(\ell)}$,
\begin{equation}
    \widehat{\bo}_t-\bo_t
    =
    \bJ_{\ell,t}\bdelta_t^{(\ell)}
    +
    o\!\left(
    \left\|\bdelta_t^{(\ell)}\right\|_2
    \right).
    \label{eq:first-order-logit}
\end{equation}
Define the hidden-space sensitivity matrix $\bQ_{\ell,t}=\bJ_{\ell,t}^T\bJ_{\ell,t}\succeq\bzero$. Then the first-order squared logit distortion is
\begin{equation}
    \left\|
    \widehat{\bo}_t-\bo_t
    \right\|_2^2
    \approx
    \bdelta_t^{(\ell)T}
    \bQ_{\ell,t}
    \bdelta_t^{(\ell)}.
    \label{eq:first-order-logit-quadratic}
\end{equation}
Let $\bC_{\delta,t}^{(\ell)}=\E\!\left[\bdelta_t^{(\ell)}\bdelta_t^{(\ell)T}\right]$ be the hidden-error second moment. Taking expectation in \eqref{eq:first-order-logit-quadratic} yields
\begin{equation}
    \E
    \!\left[
    \left\|
    \widehat{\bo}_t-\bo_t
    \right\|_2^2
    \right]
    \approx
    \Tr
    \!\left(
    \bQ_{\ell,t}
    \bC_{\delta,t}^{(\ell)}
    \right).
    \label{eq:expected-logit-distortion-trace}
\end{equation}
Ordinary hidden-space MSE equals
$\E[\|\bdelta_t^{(\ell)}\|_2^2]=\Tr(\bC_{\delta,t}^{(\ell)})$. It is
aligned with \eqref{eq:expected-logit-distortion-trace} only when
$\bQ_{\ell,t}=\gamma\bI_d$ for some $\gamma>0$. In general, equal-energy
hidden errors can produce very different logit perturbations.

Let $\br_{b}\in\R^{q_{d}}$ denote the reconstruction error for block $b$. The expected norm of $\br_{b}$ is 
\begin{equation}
    \E\!\left[\|\br_{b}\|_2^2\right]
    =
    \E\!\left[\|\widehat{\bs}_{b}-\bs_{b}\|_2^2\right]
    \triangleq
    \epsilon_{b}.
    \label{eq:real-complex-error-normalization}
\end{equation}
Let $\mathbf G_{m,b}\in\R^{d\times b}$ embed this block error into the hidden space. Then
\begin{equation}
    \bdelta_t^{(\ell)}
    =
    \sum_{b\in\cB_{t,\ell}}
    \mathbf G_b\br_b.
    \label{eq:hidden-error-decomposition}
\end{equation}
Substituting \eqref{eq:hidden-error-decomposition} into
\eqref{eq:first-order-logit} gives
\begin{equation}
    \widehat{\bo}_t-\bo_t
    \approx
    \sum_{b\in\cB_{t,\ell}}
    \bJ_{\ell,t}\mathbf G_b\br_b.
    \label{eq:block-logit-error-sum}
\end{equation}
The block sensitivity matrix is
\begin{equation}
    \bOmega_b
    =
    \mathbf G_b^T
    \bQ_{\ell,t}
    \mathbf G_b
    =
    (\bJ_{\ell,t}\mathbf G_b)^T
    (\bJ_{\ell,t}\mathbf G_b)
    \succeq\bzero.
    \label{eq:block-sensitivity-matrix}
\end{equation}

If only block $b$ is perturbed, its first-order squared logit distortion is $\left\|\bJ_{\ell,t}\mathbf G_b\br_b\right\|_2^2=\br_b^T\bOmega_b\br_b$. Thus, $\bOmega_b$ is the restriction of $\bQ_{\ell,t}$ to the error subspace of block $b$. A full matrix-weighted design based on $\bOmega_b$ preserves directional
sensitivity inside each block but complicates the wireless transceiver. We
therefore use the scalar block weight
\begin{equation}
    \omega_b
    =
    \frac{1}{q_b}\Tr(\bOmega_b)
    =
    \frac{1}{q_b}
    \left\|
    \bJ_{\ell,t}\mathbf G_b
    \right\|_F^2.
    \label{eq:resource-weight}
\end{equation}
A larger $\omega_b$ means that the block-$b$ aggregation error has a larger
average first-order effect on the next-token logits.

\begin{figure}[t]
    \centering
    \resizebox{0.99\columnwidth}{!}{
\begin{tikzpicture}[
    font=\sffamily,
    >=Latex,
    block/.style={draw=blue!65!black, thick, rounded corners=2pt, fill=blue!3, minimum width=24mm, minimum height=11mm, align=center, inner sep=3pt},
    mat/.style={draw=green!45!black, thick, rounded corners=2pt, fill=green!4, minimum width=24mm, minimum height=11mm, align=center, inner sep=3pt},
    weight/.style={draw=red!70!black, thick, rounded corners=2pt, fill=red!4, minimum width=24mm, minimum height=11mm, align=center, inner sep=3pt},
    arr/.style={->, thick, blue!65!black},
    smallarr/.style={->, thick, gray!80}
]
\node[block] (airerr) at (0,0) {Block error\\[-1mm]\scriptsize $\br_b\in\R^{q_b}$};
\node[mat, right=9mm of airerr] (recon) {Hidden embedding\\[-1mm]\scriptsize $\mathbf G_b\br_b$};
\node[mat, right=9mm of recon] (omega) {Block sensitivity\\[-1mm]\scriptsize $\bOmega_b=\mathbf G_b^T\bQ_{\ell,t}\mathbf G_b$};
\node[weight, right=9mm of omega] (w) {Scalar weight\\[-1mm]\scriptsize $\omega_b=q_b^{-1}\Tr(\bOmega_b)$};
\node[weight, below=7mm of w] (obj) {Wireless objective\\[-1mm]\scriptsize $\omega_b\epsilon_b$};
\node[mat, below=7mm of recon] (qmat) {Logit sensitivity\\[-1mm]\scriptsize $\bQ_{\ell,t}=\bJ_{\ell,t}^T\bJ_{\ell,t}$};
\node[block, above=7mm of recon] (logit) {Logit perturbation\\[-1mm]\scriptsize $\Delta\bo_t\approx\bJ_{\ell,t}\mathbf G_b\br_b$};

\draw[arr] (airerr) -- (recon);
\draw[arr] (recon) -- (omega);
\draw[arr] (qmat) -| (omega.south);
\draw[arr] (omega) -- (w);
\draw[arr] (w) -- (obj);
\draw[smallarr] (recon) -- (logit);
\draw[smallarr] (logit.east) -| node[pos=0.35, above, font=\scriptsize] {$\|\Delta\bo_t\|^2=\br_b^T\bOmega_b\br_b$} (omega.north);
\end{tikzpicture}}
    \caption{Construction of scalar AirComp weights from block-level PHY error and local next-token logit sensitivity.}
    \label{fig:sensitivity-pipeline}
\end{figure}

\begin{lemma}
\label{lem:scalar-sensitivity-reduction}
For a fixed pair $(t,\ell)$, suppose
\eqref{eq:hidden-error-decomposition} holds. If
$\{\br_b\}_{b\in\cB_{t,\ell}}$ are mutually uncorrelated and satisfy
\begin{equation}
    \E\!\left[\br_b\br_b^T\right]
    =
    \frac{\epsilon_b}{q_b}\bI_{q_b},
    \qquad b\in\cB_{t,\ell},
    \label{eq:isotropic-resource-error}
\end{equation}
then, under the first-order logit approximation,
\begin{equation}
    \E\!\left[
    \left\|
    \widehat{\bo}_t-\bo_t
    \right\|_2^2
    \right]
    \approx
    \sum_{b\in\cB_{t,\ell}}
    \omega_b\epsilon_b.
    \label{eq:weighted-mse-reduction}
\end{equation}
\begin{proof}
    See Appendix~\ref{app:proof-scalar-sensitivity}.
\end{proof}
\end{lemma}

\begin{theorem}
\label{thm:sensitivity-bound}
Suppose $\bPhi_{\ell,t}$ has a locally $L_{\ell,t}$-Lipschitz Jacobian in a
neighborhood of $\bx_t^{(\ell+1)}$ and
\eqref{eq:hidden-error-decomposition} holds for a fixed pair $(t,\ell)$.
If the block errors $\{\br_b\}_{b\in\cB_{t,\ell}}$ are mutually uncorrelated
and satisfy \eqref{eq:isotropic-resource-error}, then, for any $\rho>0$,
\small
\begin{align}
    \E\!\left[
    \left\|
    \widehat{\bo}_t-\bo_t
    \right\|_2^2
    \right]
    \le&
    (1+\rho)
    \sum_{b\in\cB_{t,\ell}}
    \omega_b\epsilon_b
   +
    \frac{1+\rho^{-1}}{4}
    L_{\ell,t}^2
    \E\!\left[
    \left\|
    \bdelta_t^{(\ell)}
    \right\|_2^4
    \right].
    \label{eq:sensitivity-weighted-upper-bound}
\end{align}
\begin{proof}
    See Appendix~\ref{app:proof-sensitivity-bound}.
\end{proof}
\end{theorem}

\begin{corollary}
\label{cor:token-preservation}
Let $y_t^\star=\argmax_j o_{t,j}$ be the clean greedy token and define the
clean top-token logit gap
$\Gamma_t=o_{t,y_t^\star}-\max_{j\ne y_t^\star}o_{t,j}$. If
\begin{equation}
    \left\|
    \widehat{\bo}_t-\bo_t
    \right\|_\infty
    <
    \frac{\Gamma_t}{2},
    \label{eq:token-preservation-condition}
\end{equation}
then greedy decoding is unchanged. Moreover, if
$D_t\ge\E[\|\widehat{\bo}_t-\bo_t\|_2^2]$, then
\begin{equation}
    \Pr\!\left[
    \argmax_j\widehat{o}_{t,j}
    \ne
    y_t^\star
    \right]
    \le
    \frac{4D_t}{\Gamma_t^2}.
    \label{eq:token-flip-bound}
\end{equation}
The value $D_t$ may be chosen as the right-hand side of
\eqref{eq:sensitivity-weighted-upper-bound}.
\begin{proof}
    See Appendix~\ref{app:proof-token-preservation}.
\end{proof}
\end{corollary}

Corollary~\ref{cor:token-preservation} is used only to interpret the
logit-aware objective. It shows that the top-token decision is stable when the
aggregation-induced logit perturbation is small compared with the clean
top-token logit gap. The optimizer itself does not introduce an additional
gap-dependent weight. Throughout the transceiver design, the only block
priority is the logit-sensitivity weight $\omega_b$ in
\eqref{eq:resource-weight}; its practical estimate is denoted by
$\widehat{\omega}_b$ in \eqref{eq:finite-diff-weight}.

\subsection{Logit-Sensitivity-Weighted MIMO AirComp Design}
\label{subsec:sensitivity-weighted-design}

For the current optimization window, denote the active AirComp
block set $\cB_m=\{b\in\cB:m\in\cM_b\}$ for the blocks served by BS $m$. Assume that $\{\bs_{m,b}\}_{m\in\cM_b}$ are zero mean and mutually
uncorrelated for each $b$, and that the receiver noise is independent of the
sources. The source covariance is
\begin{equation}
    \bR_{m,b}
    =
    \E\!\left[
    \bs_{m,b}\bs_{m,b}^{H}
    \right]
    \succeq\bzero.
    \label{eq:source-covariance}
\end{equation}
The AirComp MSE on block $b$ is
\begin{equation}
    \epsilon_b
    =
    \E\!\left[
    \left\|
    \bA_b^H
    \left(
    \sum_{m\in\cM_b}
    \bH_{m,b}\bF_{m,b}\bs_{m,b}
    +
    \bw_b
    \right)
    -
    \bs_b
    \right\|_2^2
    \right].
    \label{eq:block-mse}
\end{equation}
Therefore, the sensitivity-weighted AirComp design is to minimize the weighted-sum MSE over all blocks \begin{subequations}
\label{prob:main}
\begin{align}
    \underset{\{\bF_{m,b}\},\{\bA_b\}}{\operatorname{minimize}}
    \quad&
    J(\bF,\bA)
    \triangleq
    \sum_{b\in\cB}
    \omega_b\epsilon_b
    \label{prob:main-obj}\\
    \operatorname{subject\;to}
    \quad&
    \sum_{b\in\cB_m}
    \Tr\!\left(
    \bF_{m,b}\bR_{m,b}\bF_{m,b}^H
    \right)
    \le
    P_m^{\rm tot},
    \quad
    \forall m\in\cM .
    \label{prob:main-total-power}
\end{align}
\end{subequations}
The total-power constraint in \eqref{prob:main-total-power} couples all
AirComp blocks served by the same BS. Therefore, the logit-sensitivity weights
change the allocation of transmit power across blocks: a more sensitive block
must compete with the other blocks for the same BS-level power budget. This
is the minimal coupling required for the weights $\{\omega_b\}$ to affect the
transceiver design. Optional implementation constraints such as spectral masks
or hardware limits can be added separately, but they are not part of the
core optimization problem. Problem~\eqref{prob:main} is not jointly convex
because $\bA_b^H\bH_{m,b}\bF_{m,b}$ is bilinear in the combiner and precoder
variables. The following lemma shows that its exact block updates have
standard convex structure.

\begin{lemma}
\label{lem:mse-decomposition}
The block MSE in \eqref{eq:block-mse} satisfies
\begin{equation}
    \epsilon_b
    =
    \sum_{m\in\cM_b}
    \left\|
    \left(
    \bA_b^H\bH_{m,b}\bF_{m,b}
    -
    \bI_r
    \right)
    \bR_{m,b}^{1/2}
    \right\|_F^2
    +
    \sigma_b^2
    \Tr\!\left(
    \bA_b^H\bA_b
    \right).
    \label{eq:mse-decomposition}
\end{equation}
For fixed transmit precoders, define
\begin{align}
    &\bR_{y,b}
    =
    \E[\by_b\by_b^H]
    =
    \sum_{m\in\cM_b}
    \bH_{m,b}\bF_{m,b}
    \bR_{m,b}
    \bF_{m,b}^H\bH_{m,b}^H
    +
    \sigma_b^2\bI_{N_{\rm R}},
    \label{eq:Ry}\\
    &\bR_{ys,b}
    =
    \E[\by_b\bs_b^H]
    =
    \sum_{m\in\cM_b}
    \bH_{m,b}\bF_{m,b}\bR_{m,b}.
    \label{eq:Rys}
\end{align}
Since $\sigma_b^2>0$, $\bR_{y,b}\succ\bzero$, the fixed-precoder
receive-combiner block of \eqref{prob:main} decouples over $b$ as
\begin{equation}
    \underset{\bA_b}{\operatorname{minimize}}
    \quad
    \omega_b
    \left[
    \Tr\!\left(
    \bA_b^H\bR_{y,b}\bA_b
    \right)
    -
    2\Re\!\left\{
    \Tr\!\left(
    \bA_b^H\bR_{ys,b}
    \right)
    \right\}
    \right],
    \label{prob:rx-block}
\end{equation}
and its unique minimizer is
\begin{equation}
    \bA_b^\star
    =
    \bR_{y,b}^{-1}\bR_{ys,b}.
    \label{eq:optimal-combiner}
\end{equation}
\begin{proof}
    See Appendix~\ref{app:proof-mse-decomposition}.
\end{proof}
\end{lemma}

Lemma~\ref{lem:mse-decomposition} connects the
physical AirComp MSE to the receive-combiner update: the matrices
$\bR_{y,b}$ and $\bR_{ys,b}$ are precisely the quadratic and linear
coefficients of the fixed-precoder MIMO combiner problem
\eqref{prob:rx-block}. Thus, the receive update is a standard linear-MMSE
combiner for estimating the aggregate $\bs_b$, not the individual user
symbols.

For fixed receive combiners, the BS-$m$ transmit subproblem is
\begin{subequations}
\label{prob:tx-block}
\begin{align}
    \underset{\{\bF_{m,b}\}_{b\in\cB_m}}{\operatorname{minimize}}
    \quad&
    \sum_{b\in\cB_m}
    \omega_b
    \left\|
    \left(
    \bA_b^H\bH_{m,b}\bF_{m,b}
    -
    \bI_r
    \right)
    \bR_{m,b}^{1/2}
    \right\|_F^2
    \label{prob:tx-block-obj}\\
    \operatorname{subject\;to}
    \quad&
    \sum_{b\in\cB_m}
    \Tr\!\left(
    \bF_{m,b}\bR_{m,b}\bF_{m,b}^H
    \right)
    \le
    P_m^{\rm tot}.
    \label{prob:tx-block-total}
\end{align}
\end{subequations}

\begin{proposition}[]
\label{prop:precoder}
For fixed $\{\bA_b\}$ and $\omega_b>0$, the BS-$m$ transmit subproblem
\eqref{prob:tx-block} is a convex quadratically constrained quadratic program (QCQP). Its globally optimal update is
\begin{align}
    &\bF_{m,b}^{\star}(\lambda_m)
    =
    \left(
    \bC_{m,b}
    +
    \frac{\lambda_m}{\omega_b}\bI_{N_{\rm T}}
    \right)^\dagger
    \bH_{m,b}^H\bA_b, b\in\cB_m,
    \label{eq:optimal-precoder-weighted}\end{align}
where the scalar dual variable is selected as
\begin{equation}
    \lambda_m^\star
    =
    \begin{cases}
    0,
    & P_m(0)\le P_m^{\rm tot},\\
    \text{the solution of } P_m(\lambda_m)=P_m^{\rm tot},
    & P_m(0)>P_m^{\rm tot}.
    \end{cases}
    \label{eq:precoder-dual-selection}
\end{equation} with $\bC_{m,b}=\bH_{m,b}^H\bA_b\bA_b^H\bH_{m,b}$ and 
$P_m(\lambda_m)
    =
    \sum_{b\in\cB_m}
    \Tr\!\left(
    \bF_{m,b}^{\star}(\lambda_m)
    \bR_{m,b}
    \bF_{m,b}^{\star H}(\lambda_m)
    \right)$.
\begin{proof}
    See Appendix~\ref{app:proof-precoder}.
\end{proof}
\end{proposition}
The weight $\omega_b$ sets the importance of block-$b$ AirComp error in
the objective \eqref{prob:main-obj}. Reducing $\epsilon_b$ by
$\Delta\epsilon_b$ decreases the objective by
$\omega_b\Delta\epsilon_b$. Hence, under the shared BS power budget
\eqref{prob:main-total-power}, the optimizer gives higher priority to
blocks whose errors have larger predicted logit impact. This can also be
seen from \eqref{eq:optimal-precoder-weighted}: if the precoder obtained
with $\lambda_m=0$ would exceed the BS power budget, then $\lambda_m$ is
increased until the total transmit power equals $P_m^{\rm tot}$. In that
case, a larger $\omega_b$ gives a smaller ratio $\lambda_m/\omega_b$ for
block $b$. If the precoder obtained with $\lambda_m=0$ already satisfies
the BS power budget, then $\lambda_m=0$ and the weight does not change the
precoder. If each block has its own fixed power budget, multiplying that
single-block objective by the positive scalar $\omega_b$ would not change
the block minimizer.

\subsection{Compact Closed-Loop Algorithm}
\label{subsec:compact-algorithm}

The only parameter that must be estimated from the LLM is the sensitivity
weight $w_b$ in \eqref{eq:resource-weight}. Directly evaluating $w_b$ requires the full Jacobian
$\bJ_{\ell,t}\in\R^{V\times d}$; instead, we estimate the same trace value by
finite differences.

For each block $b$ selected for weight estimation, draw independent random
vectors $\{\bq_{b,s}\}_{s=1}^{S}\subset\R^{q_b}$ satisfying
$\E[\bq_{b,s}\bq_{b,s}^T]=\bI_{q_b}$. For perturbation radius
$\varepsilon>0$, define
\begin{equation}
    \widehat{\omega}_b
    =
    \frac{1}{q_bS}
    \sum_{s=1}^{S}
    \frac{
    \left\|
    \bPhi_{\ell,t}
    \!\left(
    \bx_t^{(\ell+1)}
    +
    \varepsilon\mathbf G_b\bq_{b,s}
    \right)
    -
    \bPhi_{\ell,t}
    \!\left(
    \bx_t^{(\ell+1)}
    \right)
    \right\|_2^2
    }{\varepsilon^2}.
    \label{eq:finite-diff-weight}
\end{equation}
Thus, $\widehat{\omega}_b$ is not a second kind of weight; it is the
Monte-Carlo finite-difference estimate of the sensitivity weight
$\omega_b$. For small $\varepsilon$, a first-order expansion of
$\bPhi_{\ell,t}$ gives
$\E[\widehat{\omega}_b]\approx q_b^{-1}\Tr(\bOmega_b)=\omega_b$.

If the logit sensitivity is recomputed over several calibration rounds, the
stored base weight is updated as
\begin{equation}
    \omega_b^{(q+1)}
    =
    \operatorname{clip}
    \!\left(
    (1-\alpha)\omega_b^{(q)}
    +
    \alpha\widehat{\omega}_b^{(q)},
    \omega_{\min},
    \omega_{\max}
    \right),
    \label{eq:sensitivity-tracking}
\end{equation}
where $q$ is the weight-estimation round, $\alpha\in(0,1]$ is the averaging
stepsize, and clipping enforces
$0<\omega_{\min}\le\omega_b\le\omega_{\max}$. If the weights are computed only
once, one can set $\alpha=1$ and use the clipped estimate from
\eqref{eq:finite-diff-weight} directly in the fixed-weight inner loop.

\begin{algorithm}[t]
\caption{Logit-Sensitivity-Weighted MIMO AirComp}
\label{alg:llm-sensitive-aircomp}
\begin{algorithmic}[1]
\Require $\{\bH_{m,b},\bR_{m,b},\mathbf G_b\}$, $\{P_m^{\rm tot}\}$, $S$, $\varepsilon$, $\alpha$, $\omega_{\min}$, $\omega_{\max}$, $\tau_{\rm in}$, $\tau_{\rm out}$.
\Ensure $\{\omega_b,\bA_b^\star,\bF_{m,b}^\star\}$.
\State Initialize feasible $\{\bF_{m,b}\}$ and $\omega_b>0$, $\forall b\in\cB$, $q=1$.
\While{$\text{true}$}
    \State Estimate $\widehat{\omega}_b^{(q)}$ by \eqref{eq:finite-diff-weight}; update $\omega_b^{(q+1)}$ by \eqref{eq:sensitivity-tracking}, $\forall b\in\cB$.
    \State Set $\bF^{(0)}$ to the current feasible precoders; set $i\leftarrow0$ and $J^{(0)}\leftarrow+\infty$.
    \Repeat
        \State $\bA_b^{(i+1)}\leftarrow\eqref{eq:optimal-combiner}$ using \eqref{eq:Ry}--\eqref{eq:Rys}, $\forall b\in\cB$.
        \State Find $\lambda_m^{(i+1)}$ by \eqref{eq:precoder-dual-selection}, $\forall m\in\cM$.
        \State $\bF_{m,b}^{(i+1)}\leftarrow\bF_{m,b}^{\star}(\lambda_m^{(i+1)})$ in \eqref{eq:optimal-precoder-weighted}, $\forall m\in\cM,b\in\cB_m$.
        \State $J^{(i+1)}\leftarrow J(\bF^{(i+1)},\bA^{(i+1)})$ in \eqref{prob:main-obj}; $i\leftarrow i+1$.
    \Until{$i\ge2$ and $|J^{(i)}-J^{(i-1)}|/\max\{1,J^{(i-1)}\}\le\tau_{\rm in}$}
    \If{$\|\boldsymbol{\omega}^{(q+1)}-\boldsymbol{\omega}^{(q)}\|_\infty/\max\{1,\|\boldsymbol{\omega}^{(q)}\|_\infty\}\le\tau_{\rm out}$}
        \State \textbf{break}
    \EndIf
    \State $q = q +1$.
\EndWhile
\end{algorithmic}
\end{algorithm}

\begin{proposition}[Monotonic convergence of the fixed-weight inner loop]
\label{prop:inner-loop-convergence}
Fix $\{\omega_b\}$, $\{\bH_{m,b}\}$, and $\{\bR_{m,b}\}$ with bounded
positive weights. From any feasible finite-power precoders, run the inner loop of
Algorithm~\ref{alg:llm-sensitive-aircomp} with the exact receive update
\eqref{eq:optimal-combiner} and transmit update
\eqref{eq:optimal-precoder-weighted}--\eqref{eq:precoder-dual-selection}.
Then the generated objective values satisfy
\begin{equation}
    J(\bF^{(i+1)},\bA^{(i+1)})
    \le
    J(\bF^{(i)},\bA^{(i)}).
    \label{eq:inner-loop-descent}
\end{equation}
Hence the fixed-weight objective sequence converges. Every accumulation
point is block-stationary for \eqref{prob:main} if it exists.
\end{proposition}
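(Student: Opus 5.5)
The argument is the standard block-coordinate-descent one: each half-step of the inner loop exactly minimizes $J$ over one block of variables with the other block held fixed. Lemma~\ref{lem:mse-decomposition} and Proposition~\ref{prop:precoder} identify these two minimizers. Write one iteration as $(\bF^{(i)},\bA^{(i)})\to(\bF^{(i)},\bA^{(i+1)})\to(\bF^{(i+1)},\bA^{(i+1)})$, and compare objective values along this chain. Here $\bF^{(0)}$ is the feasible initialization. For $i=0$ the comparison is understood against $J^{(0)}=+\infty$, or equivalently against $J(\bF^{(0)},\bA)$ for any $\bA$.

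\emph{Receive step.} With $\bF^{(i)}$ fixed, Lemma~\ref{lem:mse-decomposition} shows that $J(\bF^{(i)},\cdot)$ decouples over $b$. Each term equals $\omega_b$ times a strictly convex quadratic in $\bA_b$, plus constants independent of $\bA_b$. Strict convexity holds because $\bR_{y,b}\succeq\sigma_b^2\bI\succ\bzero$. Since $\omega_b>0$, the unique global minimizer is \eqref{eq:optimal-combiner}. The receive step imposes no constraint, so $J(\bF^{(i)},\bA^{(i+1)})\le J(\bF^{(i)},\bA^{(i)})$.

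\emph{Transmit step.} With $\bA^{(i+1)}$ fixed, the decomposition \eqref{eq:mse-decomposition} shows that the noise term $\sigma_b^2\Tr(\bA_b^H\bA_b)$ does not depend on $\bF$. The remaining objective separates across BSs $m$, because the power constraints \eqref{prob:main-total-power} are also per-BS. Therefore minimizing $J(\cdot,\bA^{(i+1)})$ over the feasible set is equivalent to solving the $M$ independent problems \eqref{prob:tx-block}. Proposition~\ref{prop:precoder} gives a globally optimal solution of each through \eqref{eq:optimal-precoder-weighted}--\eqref{eq:precoder-dual-selection}. The current iterate $\bF^{(i)}$ is feasible: it is the initialization when $i=0$ and a previous optimal update otherwise. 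Hence $J(\bF^{(i+1)},\bA^{(i+1)})\le J(\bF^{(i)},\bA^{(i+1)})$. Chaining the two inequalities gives \eqref{eq:inner-loop-descent}.

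\emph{Convergence and stationarity.} $J\ge0$ as a weighted sum of MSEs, and the sequence is nonincreasing, so it converges. For block-stationarity, take a subsequence converging to $(\bar\bF,\bar\bA)$. Both update maps are continuous. For the combiner, $\bR_{y,b}^{-1}$ is continuous because $\bR_{y,b}\succeq\sigma_b^2\bI$. For the precoder, the argument runs through $\lambda_m^\star$. The power function $P_m(\lambda)$ is continuous and nonincreasing in $\lambda$, which makes $\lambda_m^\star$ a continuous function of $\bA$ wherever the pseudo-inverse is well behaved.

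Passing to the limit in the two optimality relations shows the following:
\begin{itemize}
\item $\bar\bA$ minimizes $J(\bar\bF,\cdot)$;
\item $\bar\bF$ minimizes $J(\cdot,\bar\bA)$ over the feasible set.
\end{itemize}
Equality of the limiting objective values along the chain, combined with uniqueness of the combiner minimizer, forces this conclusion. Both subproblems are convex, and the transmit subproblem has a Slater point ($\bF=\bzero$ with $P_m^{\rm tot}>0$), so these minimality conditions are exactly the blockwise KKT conditions.

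\emph{Expected obstacle.} The delicate step is continuity of the precoder map. The operator $(\bC_{m,b}+\tfrac{\lambda_m}{\omega_b}\bI)^\dagger$ at $\lambda_m=0$ is a pseudo-inverse, and pseudo-inverses are discontinuous when rank changes. The minimizer of the transmit problem may also fail to be unique. I would first handle this with the standard two-block argument (Grippo--Sciandrone): for two blocks, accumulation points are stationary without uniqueness, using only the descent inequalities. Concretely, for any feasible $\bF$, $J(\bF^{(i+1)},\bA^{(i+1)})\le J(\bF,\bA^{(i+1)})$; pass to the limit using continuity of $J$ and of the combiner map. This avoids needing continuity of the precoder map altogether.
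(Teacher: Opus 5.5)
Your proposal is correct and follows the paper's proof. The receive and transmit half-steps are exact block minimizations (via Lemma~\ref{lem:mse-decomposition} and Proposition~\ref{prop:precoder}); chaining them gives the descent inequality, and $J\ge 0$ gives convergence. Your Grippo--Sciandrone two-block limit argument is a more careful version of the paper's one-line claim that a strict single-block decrease at an accumulation point would persist nearby and contradict convergence, and it correctly avoids the pseudo-inverse continuity issue you flagged.
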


Proposition~\ref{prop:inner-loop-convergence} is a fixed-weight statement.
The outer loop refreshes the finite-difference sensitivity weights and thus
tracks a changing objective; monotonicity applies after those weights are fixed.

\begin{table*}[t]
    \centering
    \caption{Examples of OpenCompass Subsets}
    \label{tab:opencompass-examples}
    \scriptsize
    \begin{tabularx}{\textwidth}{@{}p{0.12\textwidth}p{0.07\textwidth}X p{0.16\textwidth}@{}}
        \toprule
        Benchmark & Item & Prompt excerpt and choices & Gold answer \\
        \midrule
        GSM8K & \#0 & Janet's ducks lay 16 eggs per day. She eats three for breakfast and uses four for muffins. She sells the remainder at the farmers' market for \$2 each. & 18 \\
        GSM8K & \#1 & A robe takes 2 bolts of blue fiber and half that much white fiber. How many bolts in total does it take? & 3 \\
        ARC-c & \#0 & Juan and LaKeisha roll a few objects down a ramp. They want to see which object rolls the farthest. What should they do so they can repeat their investigation? Choices: A) Put the objects in groups; B) Change the height of the ramp; C) Choose different objects to roll; D) Record the details of the investigation. & D \\
        ARC-c & \#1 & High-pressure systems stop air from rising into the colder regions of the atmosphere where water can condense. What will most likely result if a high-pressure system remains in an area for a long period of time? Choices: A) fog; B) rain; C) drought; D) tornado. & C \\
        \bottomrule
    \end{tabularx}
\end{table*}

Let $B=|\cB|$ and let $I_{\rm in}$ be the number of inner block coordinate descent (BCD) iterations.
Each receive update requires one $N_{\rm R}\times N_{\rm R}$ linear solve.
Each transmit update requires one $N_{\rm T}\times N_{\rm T}$ linear solve
per active BS-block pair, plus scalar search over $\lambda_m$ to enforce
\eqref{prob:main-total-power}. With dual-search tolerance $\tau_\lambda$,
the per-round computational complexity is
\begin{equation}
    \mathcal O
    \!\left(
    I_{\rm in}
    B
    \left[
    N_{\rm R}^3
    +
    M_{\rm act}N_{\rm T}^3\log(1/\tau_\lambda)
    \right]
    \right),
    \label{eq:complexity}
\end{equation}
where $M_{\rm act}$ is the average number of active BSs per aggregation
block. The finite-difference calibration in \eqref{eq:finite-diff-weight}
requires $S$ perturbed logit evaluations per calibrated block. This overhead
can be reduced by sharing weights across nearby blocks, calibrating only
high-traffic layers, or replacing finite differences with Jacobian-vector
and vector-Jacobian products when model internals are accessible.

\section{Simulation Results}
\label{sec:simulation-design}

This section evaluates the proposed framework from three complementary angles.
First, a controlled Qwen3/OpenCompass replay tests whether errors injected into the
MoE aggregation outputs can change task accuracy. Second, a direct
ARC-Challenge closed-loop audit inserts AirComp aggregation perturbations into
Qwen3 token generation and measures task accuracy. Third, a
MIMO wireless simulator tests whether the proposed sensitivity-weighted MIMO AirComp design reduces the aggregation error that the LLM is most sensitive to. The first experiment is a pretrained-model sensitivity check, the second is a task-level AirComp comparison on ARC-Challenge, and the third is a physical-layer optimizer comparison. The results show that MoE aggregation quality is important to the inference quality, where the sensitivity-weighted AirComp (SW-AirComp) improves direct ARC-Challenge accuracy relative to the tested AirComp baselines, and the proposed transceiver effectively reduces LLM-relevant wireless distortion.
\subsection{Benchmark Setup}
\label{subsec:real-moe-benchmark-setup}

OpenCompass is used as the benchmark and scoring framework, while GSM8K and
ARC-Challenge provide the evaluated datasets~\cite{opencompass2023}.
Table~\ref{tab:opencompass-examples} lists representative examples from the
evaluated subsets. The pretrained model used in the replay is
Qwen3-30B-A3B-Instruct-2507-FP8, which has 48 MoE layers and 128 routed
experts in the local implementation. The clean reference uses the original
pretrained forward pass. The perturbed replay uses the same checkpoint,
prompts, tokenizer, and OpenCompass settings, but replaces each selected MoE
aggregation output by
$\widehat{\bx}_t^{(\ell+1)}=\bx_t^{(\ell+1)}+\bdelta_t^{(\ell)}$.
This controlled perturbation emulates aggregation error at the hidden-state
interface and makes the benchmark sensitive to MoE aggregation quality. For a
target aggregation SNR, the perturbation energy is normalized to the clean MoE
aggregate energy as follows:
\begin{equation}
    \E\!\left[
    \frac{\|\bdelta_t^{(\ell)}\|_2^2}{d_h}
    \right]
    =
    \left(
    \frac{\|\bx_t^{(\ell+1)}\|_2^2}{d_h}
    \right)
    10^{-{\rm SNR}/10},
\end{equation}
where $\bx_t^{(\ell+1)}$ is the clean MoE aggregate, $d_h$ is the hidden
dimension, and $\bdelta_t^{(\ell)}$ is independent Gaussian aggregation noise.
GSM8K is evaluated with OpenCompass zero-shot generation on a fixed
500-example subset. ARC-Challenge is evaluated as a multiple-choice task, where
the model selects one answer from the candidate options and the result is
scored by accuracy. The tested perturbation SNRs span $0$--$30$ dB.
\begin{figure}[t]
    \centering
 \includegraphics[height=0.5\linewidth,width=0.8\linewidth]{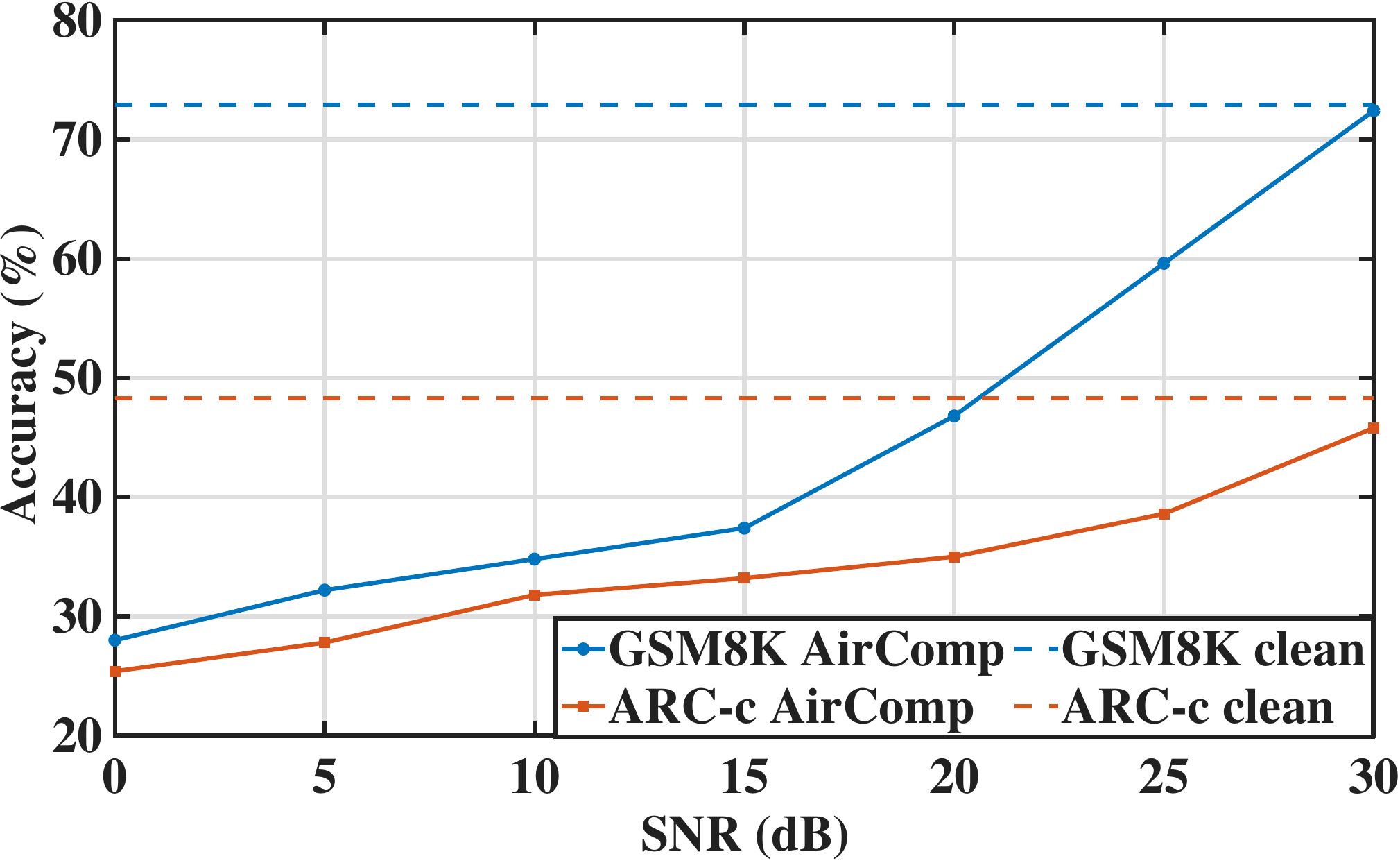}
        \caption{Qwen3 OpenCompass accuracy under controlled MoE aggregation perturbations.}
        \label{fig:qwen3-opencompass-snr}
\end{figure}
\begin{table}[!t]
    \centering
    \caption{Wireless Parameters}
    \label{tab:wireless-sim-setup}
    \scriptsize
    \begin{tabular}{@{}lc@{}}
        \toprule
        Parameter & Value \\
        \midrule
        Number of BSs & $M=4$ \\
        Active BSs per aggregation & $|\cM_b|=2$ \\
        Transmit antennas per BS & $N_{\rm T}\in\{32,64,128\}$ \\
        Receive antennas & $N_{\rm R}=2$ \\
        System bandwidth & $20$ MHz \\
        Maximum BS transmit power & $P_m^{\rm tot}=1$ W ($30$ dBm) \\
        \bottomrule
    \end{tabular}
\end{table}

\subsection{Wireless Simulator}
\label{subsec:wireless-sim-metrics}
The wireless simulator instantiates 16 aggregation blocks. These blocks are
the wireless optimization units used to pack hidden-state coordinates and
block-level sensitivities into resource-block MIMO transmissions; they should not be
read as the number of MoE layers in Qwen3. The simulator evaluates AirComp
performance under controlled channels, SNRs, and antenna configurations. It
uses the sensitivity weights derived from hidden/logit traces and clean
top-token logit gaps to test whether the wireless optimizer allocates protection
to the aggregation blocks that are most important for next-token preservation.

Table~\ref{tab:wireless-sim-setup} summarizes the wireless simulation parameters. The simulator considers a MIMO aggregation channel over frequency-domain resource blocks with system bandwidth of $20$ MHz. The transmit power follows the per-BS power constraint in \eqref{prob:main-total-power}. In the performance evaluation, The compared practical schemes are SW-AirComp, UW-AirComp,
MF-AirComp, and ZF-AirComp. MF-AirComp denotes a matched-filter AirComp
baseline. SW-AirComp
denotes the proposed logit-sensitivity-weighted MMSE AirComp design. It minimizes the
logit-sensitivity-weighted aggregation MSE, so that MoE aggregation variables with
larger impact on the final inference output receive higher protection in the
AirComp transceiver optimization. UW-AirComp is the unweighted MMSE AirComp
baseline, where the same optimization is performed with all sensitivity weights
set to one. It uses channel-matched analog transmit and receive beamforming,
followed by power normalization to satisfy the per-BS transmit-power
constraint. ZF-AirComp denotes a zero-forcing AirComp baseline, which attempts
to invert the effective MIMO channel so that the received superposition matches
the desired aggregate. MF-AirComp and ZF-AirComp do not use the logit-sensitivity
weights, and therefore serve as PHY-layer AirComp baselines. These
baselines separate the gain from logit-sensitivity weighting from the gain of the
proposed AirComp transceiver optimization.

\subsection{Qwen3 OpenCompass Replay}
\label{subsec:qwen3-opencompass-results}
Fig.~\ref{fig:qwen3-opencompass-snr} reports the Qwen3 inference accuracy under perturbations. Fig.~\ref{fig:qwen3-opencompass-snr} combines GSM8K and ARC-Challenge because
both measurements use the same checkpoint, perturbation replay path, SNR,
and accuracy metric. GSM8K shows the clearest task-level effect. For GSM8K, the replay accuracy increases from approximately $28.0\%$ at $0$ dB to $46.8\%$ at $20$ dB and $72.4\%$ at $30$ dB,
approaching the approximately $72.9\%$ clean-reference accuracy. These results demonstrate that aggregation errors can substantially
degrade inference performance, whereas sufficiently
reliable aggregation recovers nearly all of the clean-model accuracy.

ARC-Challenge provides a harder non-math check. The replay reaches
$45.8\%$ at 30 dB compared with the $48.3\%$ clean reference, while the
0 dB audit gives $26.44\%$ accuracy averaged over ten noise seeds. The ARC-Challenge confirms that MoE aggregation
perturbations affect downstream multiple-choice accuracy and that high SNR is
better in this run. Based on the plotted results, the clean-normalized retention at
30 dB is approximately $72.4/72.9=99.3\%$ for GSM8K and
$45.8/48.3=94.8\%$ for ARC-Challenge.

To close the task-level loop for the AirComp baselines, we further run a direct
ARC-Challenge audit in which calibrated aggregation perturbations from each
AirComp design are inserted into every Qwen3 MoE aggregation call during
autoregressive decoding. The hooks cover all 48 Qwen3 MoE layers in the local
implementation, and the generated answers are scored over 295 ARC-Challenge
examples using the same answer-extraction pipeline. Fig.~\ref{fig:arc-direct-closed-loop}
reports the resulting closed-loop accuracy. At 30~dB, SW-AirComp obtains
$43.73\%$ accuracy ($129/295$), compared with $40.34\%$ for UW-AirComp,
$37.29\%$ for MF-AirComp, and $33.56\%$ for ZF-AirComp. At 20~dB,
SW-AirComp obtains $26.78\%$ accuracy ($79/295$), compared with $25.42\%$,
$24.41\%$, and $24.07\%$ for the same baselines. Thus, the direct ARC-Challenge
closed-loop result supports the stronger task-level statement: for the
tested ARC-Challenge subset and high-SNR operating points, the
SW-AirComp perturbation path yields higher Qwen3 task
accuracy than AirComp baselines.

\begin{figure}[t]
    \centering
    \safeincludegraphics[width=0.75\linewidth]{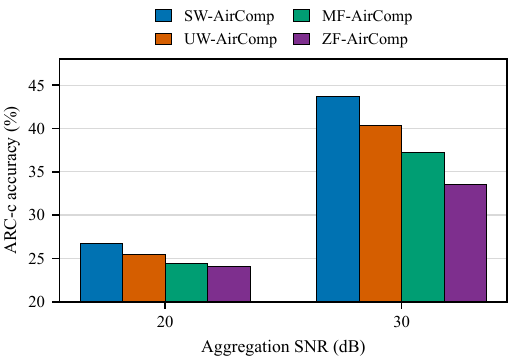}
    \caption{ARC-Challenge Accuracy: AirComp Perturbations.}
    \label{fig:arc-direct-closed-loop}
\end{figure}

The simulator suggests three operating regimes for distributed MoE AirComp. At
very low SNR, all baselines are dominated by insufficient received signal
quality, so fine-grained sensitivity allocation cannot compensate for the loss
of the hidden state aggregation. At very high SNR, residual distortion is small for all
reasonable transceiver designs, and the gap among methods should narrow. The
most important region is the middle SNR range, where the system has a limited
but useful error budget. In this regime, the transceiver cannot remove every
aggregation error, but it can decide where the residual error should be placed.

\subsection{Massive-MIMO Scaling Study}
\label{subsec:massive-mimo-results}

\begin{figure}[t]
    \centering
        \includegraphics[width=0.75\linewidth,height = 0.24\textwidth]{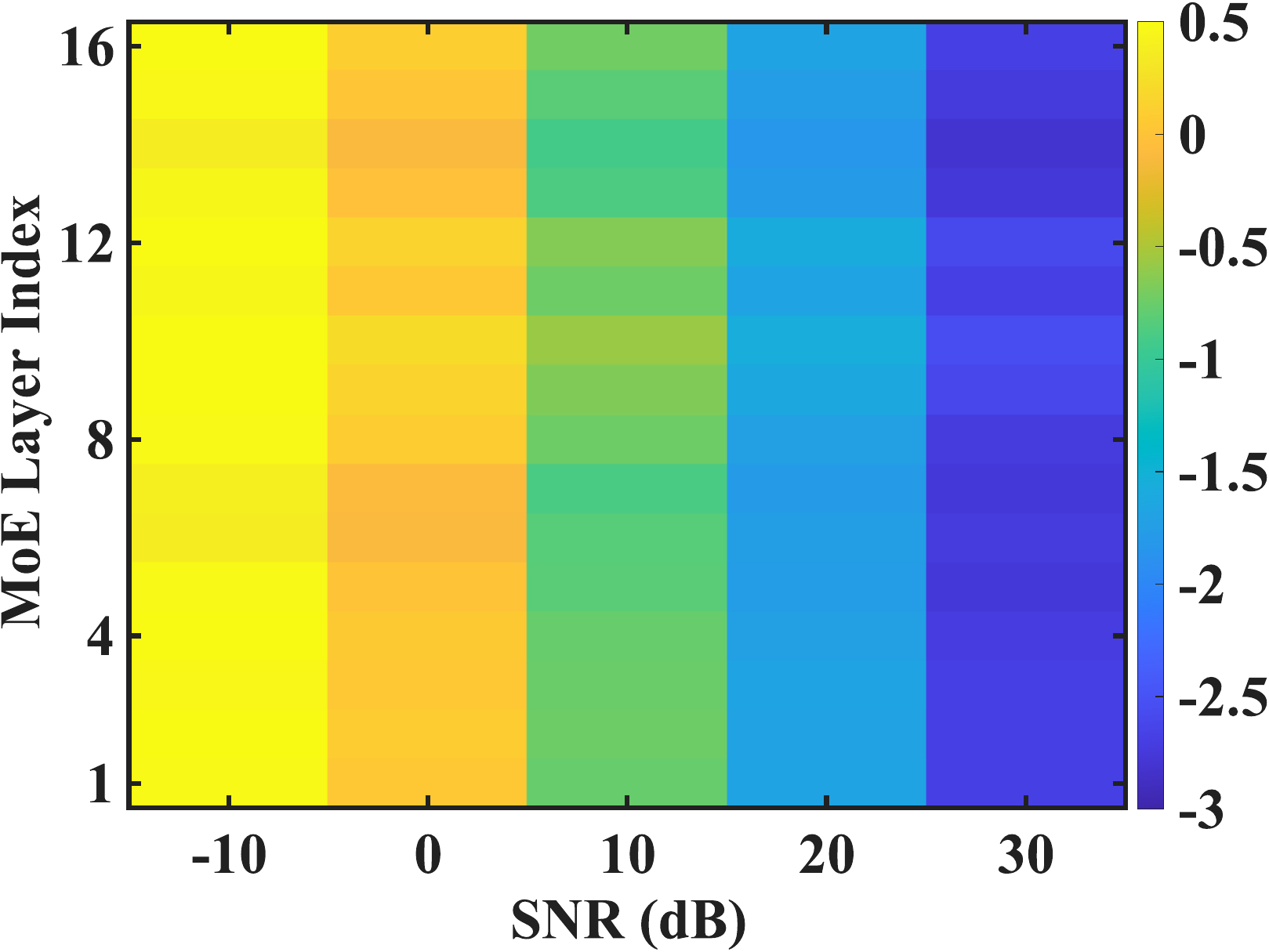}
        \caption{Block--SNR WMSE heatmap.}
        \label{fig:layer_error}
\end{figure}

\begin{figure}[t]
    \centering
 \includegraphics[width=0.75\linewidth,,height = 0.25\textwidth]{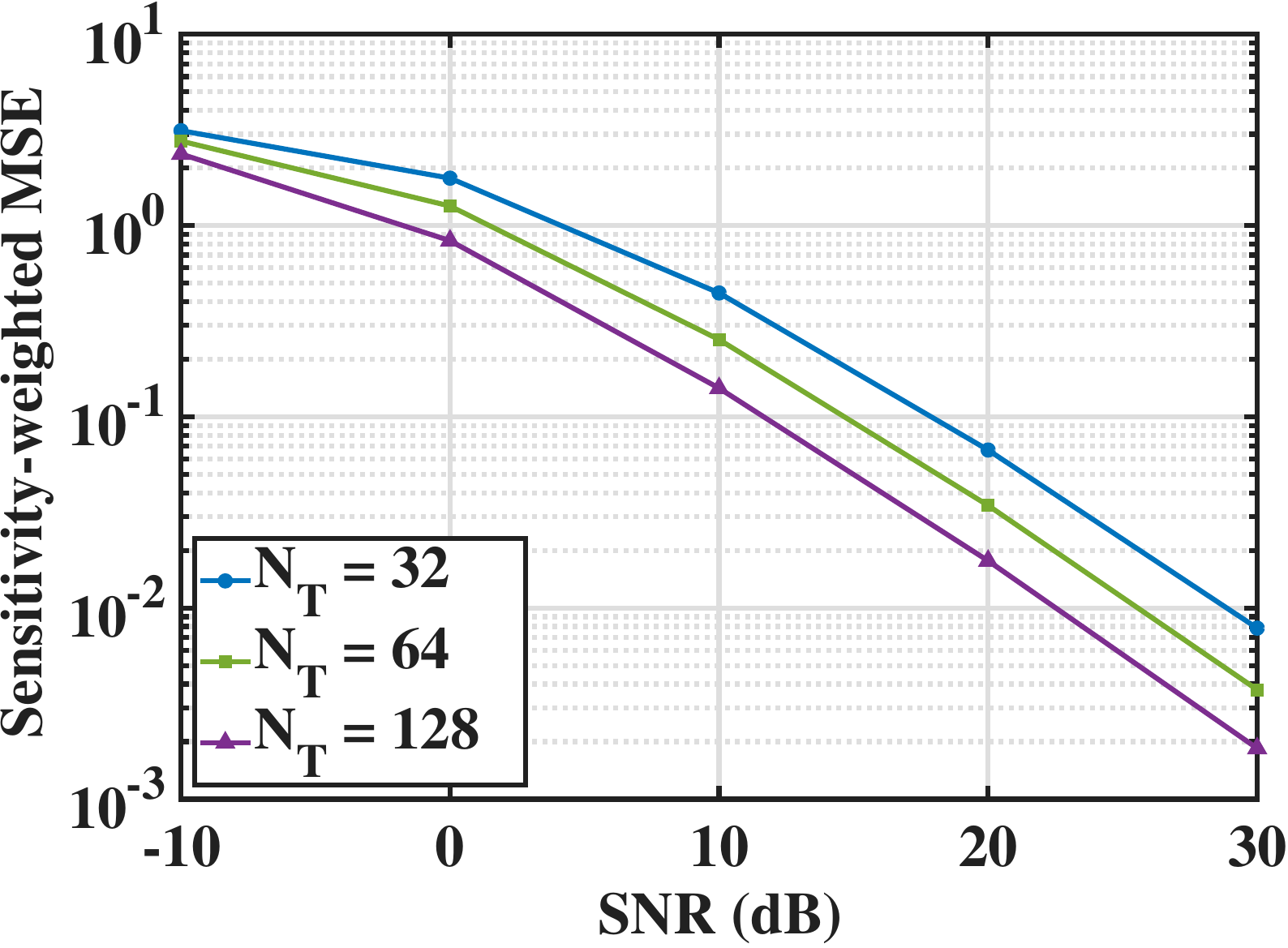}
        \caption{Sensitivity-Weighted MSE vs SNR for Antenna Scaling.}
        \label{fig:mimo}
\end{figure}

\begin{figure}[!t]
    \centering
    \subfigure[WMSE ratio at $N_T=128$]{
        \includegraphics[width=0.35\textwidth,height = 0.23\textwidth]{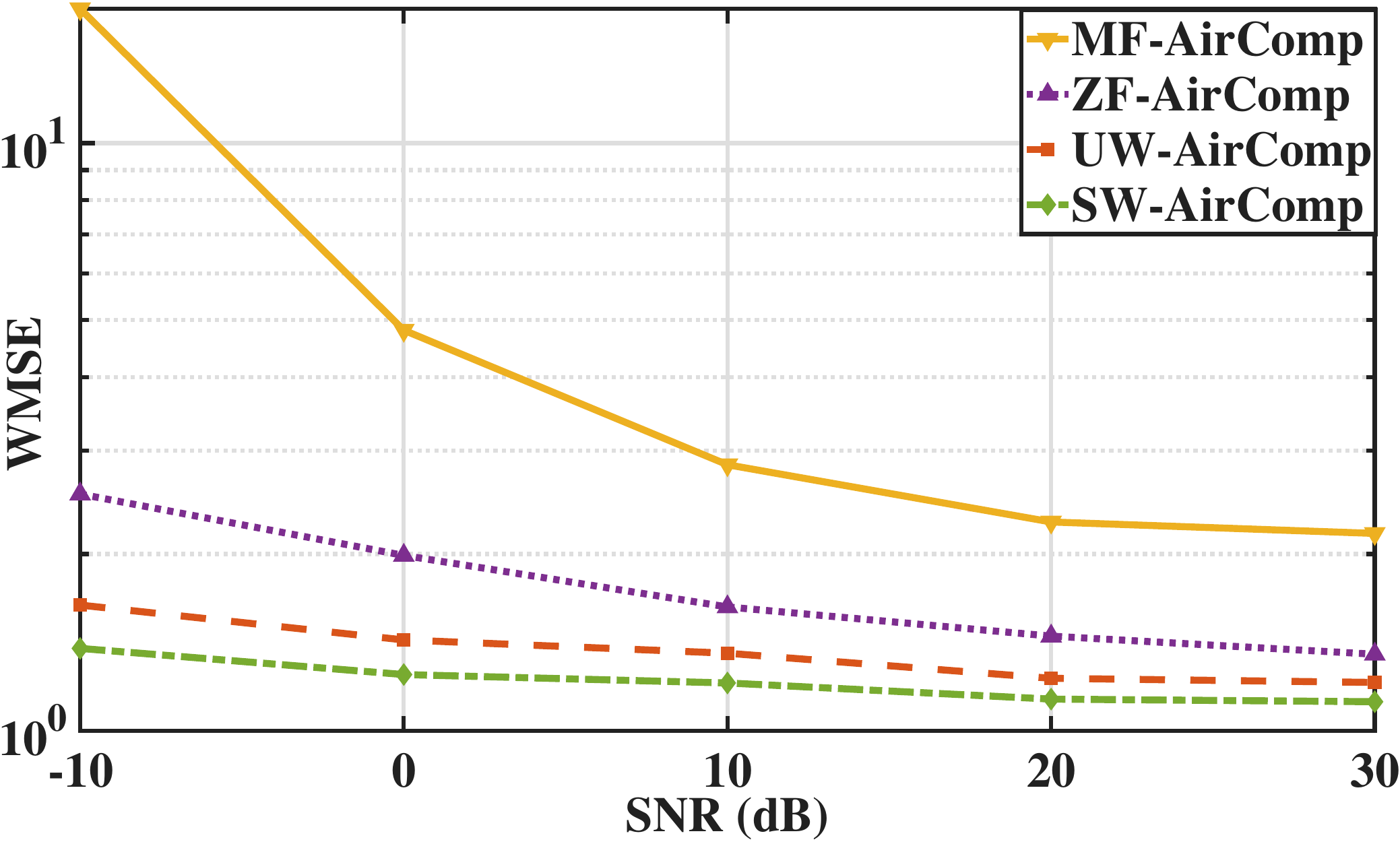}}\label{fig:mimo_2}
        \quad
    \subfigure[WMSE reduction at $N_T=128$]{
        \includegraphics[width=0.35\textwidth,height = 0.23\textwidth]{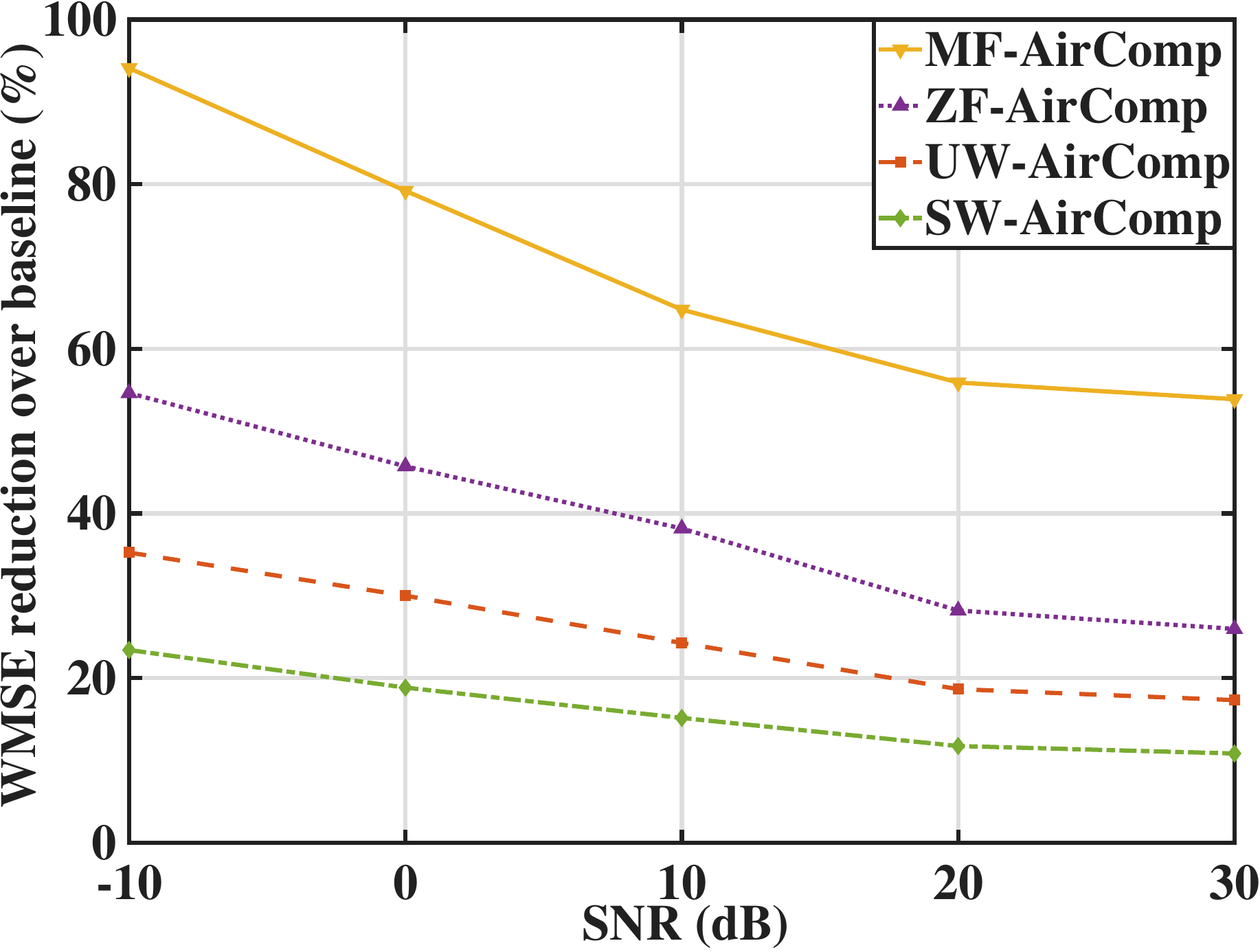}}\label{fig:mimo_3}
    \caption{Massive-MIMO aggregation study over 16 wireless aggregation blocks. }
    \label{fig:massive-mimo-study}
\end{figure}

Fig.~\ref{fig:layer_error} is placed first because it shows the
structural reason for sensitivity-aware allocation: residual distortion is
not uniformly distributed over the sparse-MoE aggregation blocks, so a single
uniform protection rule is too coarse. The 16 integer indices in
Fig.~\ref{fig:layer_error} are wireless aggregation blocks produced by the
packing and sensitivity-reduction procedure. They are optimization blocks in
the wireless simulator, not a statement about the number of MoE layers in
Qwen3. The heatmap also explains why a scalar average-MSE objective is weak:
two designs with similar average distortion can place their residual error on
different blocks, and the block with larger logit sensitivity is more likely to
perturb the output logits.

Fig.~\ref{fig:mimo} verifies the expected antenna-scaling
behavior: increasing $N_T$ improves the weighted aggregation MSE over the
tested SNR range. This is the array-gain effect expected from a larger
transmit aperture: the anchor BS has more spatial degrees of freedom to align the
desired sum of expert outputs and suppress aggregation error under the same power budget.
Fig.~\ref{fig:massive-mimo-study} then compares the transceiver designs at
$N_T=128$. The antenna-scaling result gives a second design lesson. More transmit antennas
provide additional spatial degrees of freedom, but extra antennas do not define
which hidden states deserve protection. A sensitivity-agnostic design can use
array gain to improve average channel matching, yet still spend part of the resource blocks whose errors have little effect on the logits.
Sensitivity weighting changes the role of the array by steering the available
degrees of freedom toward aggregation blocks with larger downstream effect.
This supports coordinating antenna resources, subcarrier assignment, and power
allocation with compact model-side metadata, rather than optimizing only from
CSI and transmit-power budgets. 

Fig.~\ref{fig:massive-mimo-study}(a) reports each baseline's WMSE vs SNR at  $N_T=128$. Fig.~\ref{fig:massive-mimo-study}(b) reports the corresponding WMSE reduction. SW-AirComp uses the same channel realization and power budget as the baselines, but it weights the objective by logit sensitivity. Its lower WMSE therefore means that the gain is not only from improving the physical channel fit; it also comes from moving the residual error away from blocks that matter more to the next-token logits. UW-AirComp isolates the value of sensitivity weighting, MF-AirComp represents channel-matched analog aggregation, and ZF-AirComp represents channel-inversion aggregation that can amplify noise when the effective channel is ill conditioned.

\begin{figure}[t]
    \centering
    \safeincludegraphics[width=0.85\linewidth,height = 0.35\textwidth]{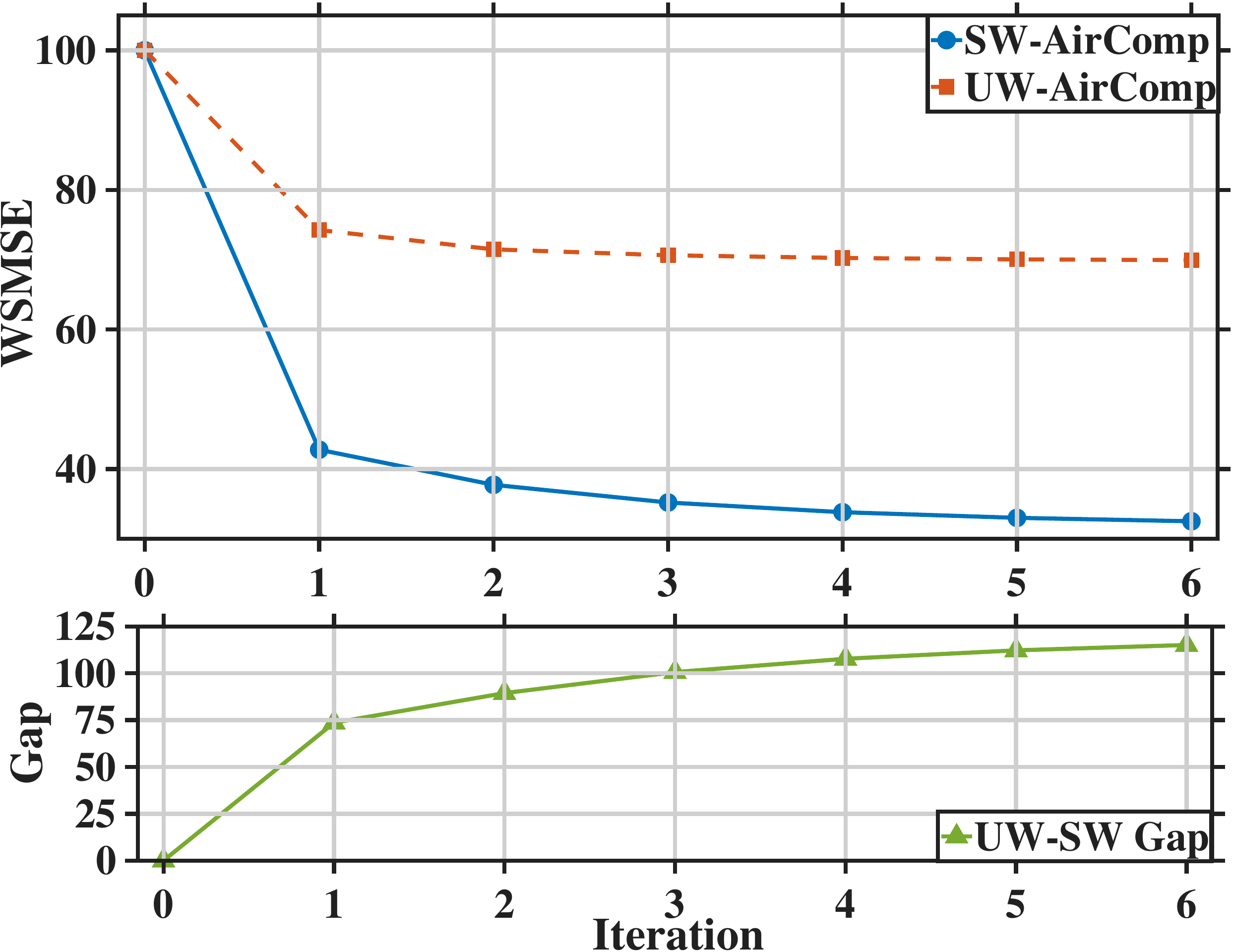}
    \caption{Normalized WSMSE convergence.}
    \label{fig:inner-loop-convergence}
\end{figure}
\begin{figure}[t]
    \centering
    \subfigure[Logit-RMSE ratio]{
        \safeincludegraphics[width=0.35\textwidth,height = 0.26\textwidth]{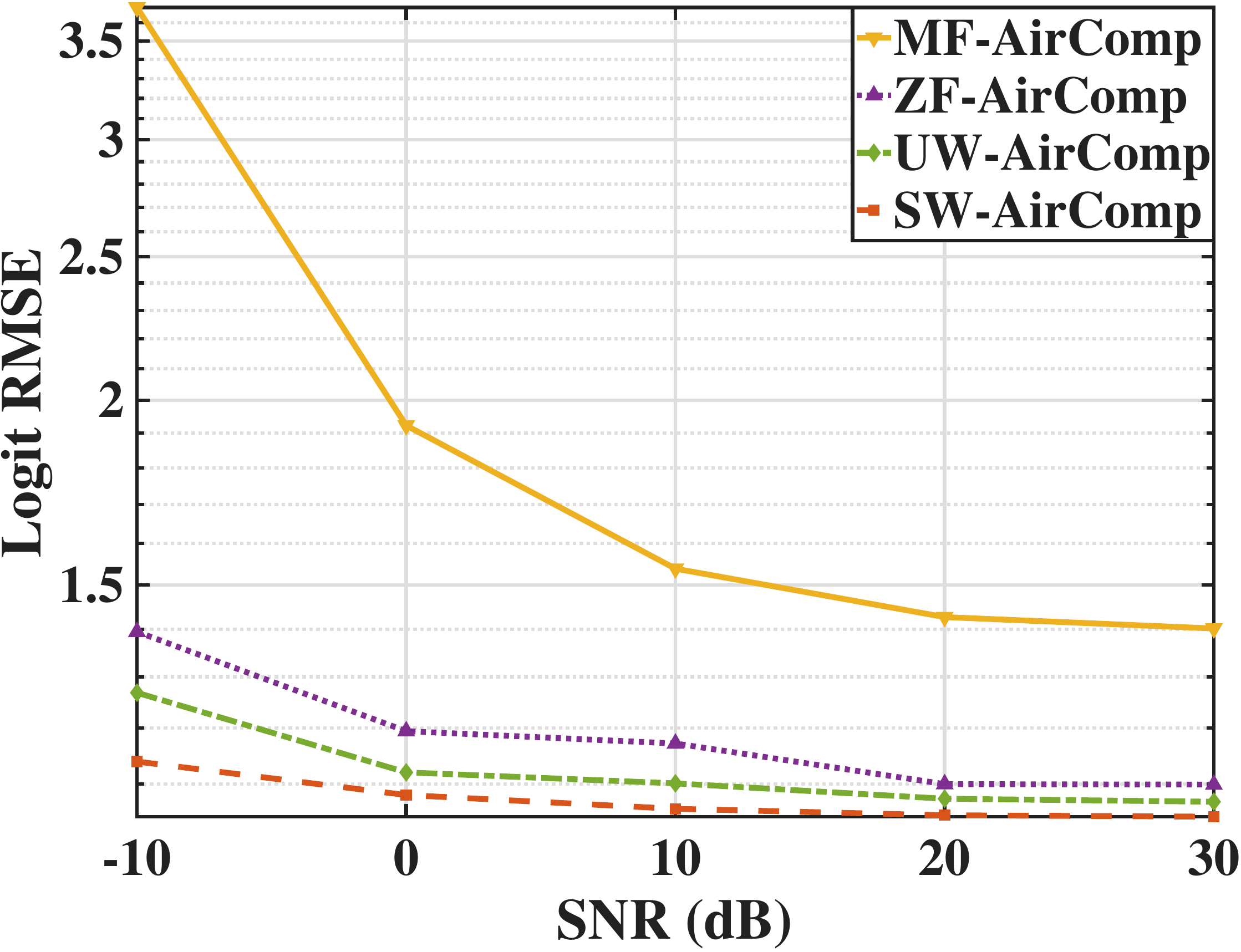}}
    \hfill
    \subfigure[Logit-gap violation rate]{
        \safeincludegraphics[width=0.35\textwidth,height = 0.26\textwidth]{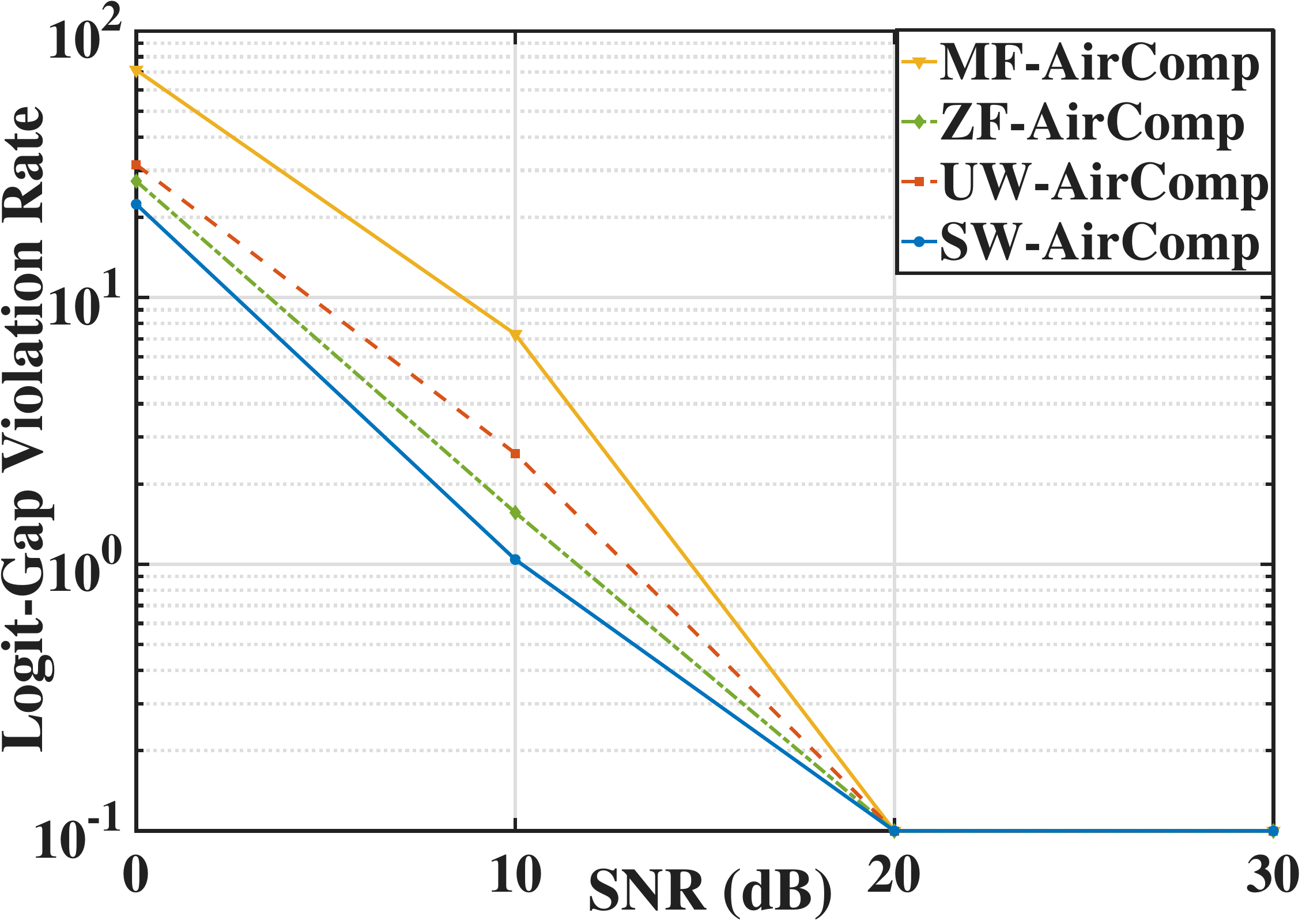}}
    \caption{(a) reports each baseline's weighted-logit RMSE normalized by SW-AirComp. (b) reports the percentage of
    aggregation blocks whose estimated logit perturbation exceeds the clean
    top-token logit gap; lower is better. }
    \label{fig:sensitivity-ablation-diagnostics}
\end{figure}

Fig.~\ref{fig:inner-loop-convergence} reports the fixed-weight convergence
diagnostic at 20 dB and $N_T=8$ with constant channel condition. The plotted metric is WSMSE, $100J_{\rm eval}^{(i)}/J_{\rm eval}^{(0)}$, where $J_{\rm eval}^{(i)}=\sum_b \omega_b\epsilon_b^{(i)}$ for $N_T=8$. Both methods are evaluated with the same sensitivity weights, channels, source covariances,
and samples; hence the plot measures reduction of the logit-aware inference-risk criterion on a single scale. The two methods start at the same point because they use the same matched-filter initialization. After the first transceiver update, SW-AirComp directly descends its WSMSE, whereas UW-AirComp descends its uniform-weight MSE objective and is then evaluated by the same WSMSE. After six transceiver updates, i.e., at iteration 6, the normalized WSMSE decreases from 100 to 32.5 for SW-AirComp and to 69.9 for UW-AirComp. The relative WSMSE gap, $100(J_{\rm UW}-J_{\rm SW})/J_{\rm SW}$, increases from $73.6\%$ after the first update to $115.1\%$ at iteration 6. Most SW-AirComp improvement occurs within the first two iterations. This behavior supports
Proposition~\ref{prop:inner-loop-convergence} for the fixed-objective SW inner solver, while the outer sensitivity refresh remains a tracking loop with a changing objective.

Let $\mathbf h_b$ and $\widehat{\mathbf h}_b^{(a)}$ be the clean and reconstructed real hidden block for method $a$ after inverse packing. The main PHY-layer metric is the sensitivity-weighted aggregation MSE
\begin{equation}
    {\rm WMSE}^{(a)}
    =
    \frac{1}{|\cB|}
    \sum_{b\in\cB}
    \omega_b
    \frac{
    \left\|
    \widehat{\mathbf h}_b^{(a)}
    -
    \mathbf h_b
    \right\|_2^2
    }{d_b},
    \label{eq:wmse-metric}
\end{equation}
where $d_b$ is the hidden dimension represented by block $b$. This metric is the empirical
counterpart of the sensitivity-weighted AirComp objective in
Problem~\eqref{prob:main}. A smaller ${\rm WMSE}^{(a)}$ means that the
reconstructed hidden aggregate is closer to the clean aggregate on the
sensitivity-weighted hidden dimensions, while a larger value means that the
AirComp error is larger or falls on blocks with larger sensitivity weights.

Fig.~\ref{fig:sensitivity-ablation-diagnostics} further uses three
diagnostic metrics. For method $a$ at SNR $\gamma$, the normalized
weighted-logit RMSE is
\begin{equation}
    \rho_{\rm logit}^{(a)}(\gamma)
    =
    \frac{
    {\rm RMSE}_{\rm logit}^{(a)}(\gamma)
    }{
    {\rm RMSE}_{\rm logit}^{({\rm SW})}(\gamma)
    },
    \label{eq:rho-logit}
\end{equation}
where values above one mean that SW-AirComp has lower logit RMSE than method
$a$, i.e., the SW-Aircomp is more advantageous. This metric measures how much the wireless aggregation error changes the
model logits after the hidden perturbation is propagated to the output layer.
The normalization uses SW-AirComp as the reference at the same SNR, so values
above one indicate larger logit distortion than SW-AirComp. The logit-gap violation rate is
\begin{equation}
    p_{\rm gap}^{(a)}(\gamma)
    =
    \frac{1}{|\cB_\gamma|}
    \sum_{b\in\cB_\gamma}
    \mathbf 1
    \!\left\{
    \left\|
    \Delta\bo_b^{(a)}
    \right\|_2
    >
    g_b
    \right\},
    \label{eq:logit-gap-violation-prob}
\end{equation}
where $\cB_\gamma$ is the evaluated block-sample set at SNR $\gamma$,
$\Delta\bo_b^{(a)}$ is the estimated first-order logit perturbation, and
$g_b$ is the clean top-token logit gap. The plotted value of
$p_{\rm gap}^{(a)}$ is reported as a percentage. This metric measures how often
the estimated logit perturbation exceeds the clean logit gap
between the two largest logits. A smaller value means that the wireless aggregation error is less
likely to change the selected token.

Fig.~\ref{fig:sensitivity-ablation-diagnostics} links the wireless WMSE
objective to inference-facing quantities. Fig.~\ref{fig:sensitivity-ablation-diagnostics}(a) reports the normalized
logit-RMSE ratio in \eqref{eq:rho-logit}. Because the denominator is the
SW-AirComp logit RMSE at the same SNR, a curve above one means that the
corresponding baseline perturbs the output logits more than the proposed
method. This is a stronger diagnostic than hidden-state WMSE alone: it asks
whether the remaining LLM computation amplifies the wireless error into the
vocabulary logits that determine the next token.

Fig.~\ref{fig:sensitivity-ablation-diagnostics}(b) reports the logit-gap violation rate in
\eqref{eq:logit-gap-violation-prob}. A higher value means that more block samples estimate logit perturbations exceeding the clean top-token logit gap. The
low-SNR overloaded region is less useful for ranking because most methods
approach saturation; the meaningful regime is the transition to usable SNR,
where SW-AirComp reduces both logit-RMSE and logit-gap violation risk relative to
UW-AirComp, MF-AirComp, and ZF-AirComp. This operating-regime explains why the evaluation reports WMSE together with logit-RMSE and logit-gap violation. These metrics are not interchangeable. A method can reduce unweighted hidden-state MSE while still placing residual error on a block with large logit sensitivity.

\subsection{Insights and Limitations}
\label{subsec:simulation-design-insights}

For edge deployment, the physical layer does not need the full LLM state. It
only needs block-level priority metadata that indicates which aggregation
blocks are more likely to affect the next-token logits. This keeps the wireless
optimizer as a communication module driven by CSI, noise variance, and power
constraints, while the sensitivity weights provide an inference-side priority
signal. In moderate SNR, this metadata improves the allocation of spatial and
power resources.

This perspective differs from AirComp for FL. Training-time
aggregation noise is usually judged through its long-term effect on a learning
trajectory, where later rounds may partly compensate for a noisy update. In
autoregressive MoE inference, an aggregation error affects the immediate hidden
state used to choose the next token, and a wrong token changes the following
context. Logit-gap violation is therefore a decision-level diagnostic, not only a
secondary communication metric.

The main practical risk is the stale or noisy sensitivity metadata. Overestimating
a block can waste power or spatial degrees of freedom, whereas underestimating
it can leave a low-WMSE aggregate that still produces a harmful logit
perturbation. Robust deployments should normalize or clip sensitivity weights,
refresh them at layer-dependent intervals, and retain unweighted AirComp as a
fallback when model-side metadata is unavailable.

The Qwen3/OpenCompass replay
demonstrates that controlled MoE aggregation perturbations can reduce
downstream accuracy and that sufficiently high aggregation SNR recovers most
clean-model performance on the tested subsets. The direct ARC-Challenge
closed-loop audit then inserts the AirComp perturbation paths into Qwen3 token
generation and shows that SW-AirComp gives the highest accuracy among
SW-AirComp, UW-AirComp, MF-AirComp, and ZF-AirComp over the
295 evaluated examples. The wireless simulator shows that the proposed
transceiver reduces sensitivity-weighted distortion under the modeled
resource-block MIMO channels. The logit-RMSE and logit-gap violation
diagnostics bridge these levels by linking hidden-state aggregation error to
output-logit perturbation and token-decision risk. A broader benchmark-level
claim still requires extending the closed-loop audit to more channel seeds,
additional SNRs, additional tasks, and fully online CSI, routing, and
sensitivity calibration.

\section{Conclusion}
\label{sec:conclusion}

This paper developed a logit-aware MIMO AirComp framework for distributed MoE inference over wireless edge networks, where the transmitted symbols represent a linear, invertible complex packing of the hidden-state contributions required by MoE aggregation. By weighting block-level AirComp distortion according to its local impact on next-token logits, the proposed design links PHY-layer aggregation errors to LLM inference quality. The resulting framework combines Jacobian-based sensitivity modeling, block-level error weighting, and power-constrained transceiver optimization to allocate wireless resources according to the semantic importance of each aggregation block. The simulation results show that incorporating logit sensitivity improves robustness over sensitivity-agnostic AirComp baselines under the modeled wireless inference setting; in the direct ARC-Challenge closed-loop audit, SW-AirComp also gives the highest Qwen3 task accuracy among the tested AirComp methods.

\appendices

\section{Proof of Lemma~\ref{lem:scalar-sensitivity-reduction}}
\label{app:proof-scalar-sensitivity}
From \eqref{eq:block-logit-error-sum}, $\widehat{\bo}_t-\bo_t
    \approx
    \sum_{b\in\cB_{t,\ell}}
    \bJ_{\ell,t}\mathbf G_b\br_b$. Taking the expected squared norm yields
\begin{align}
    \E\!\left[
    \left\|
    \widehat{\bo}_t-\bo_t
    \right\|_2^2
    \right]
    &\approx
    \sum_{b\in\cB_{t,\ell}}
    \Tr\!\left(
    \mathbf G_b^T
    \bJ_{\ell,t}^T
    \bJ_{\ell,t}
    \mathbf G_b
    \E[\br_b\br_b^T]
    \right),
\end{align}
because the cross terms vanish by mutual uncorrelatedness of the block
errors. Using \eqref{eq:isotropic-resource-error},
\eqref{eq:block-sensitivity-matrix}, and \eqref{eq:resource-weight},
\begin{equation}
    \E\!\left[
    \left\|
    \widehat{\bo}_t-\bo_t
    \right\|_2^2
    \right]
    \approx
    \sum_{b\in\cB_{t,\ell}}
    \frac{\epsilon_b}{q_b}\Tr(\bOmega_b)
    =
    \sum_{b\in\cB_{t,\ell}}
    \omega_b\epsilon_b .
\end{equation}
This proves the scalar reduction.

\section{Proof of Theorem~\ref{thm:sensitivity-bound}}
\label{app:proof-sensitivity-bound}
The Taylor's theorem with locally $L_{\ell,t}$-Lipschitz Jacobian gives
\begin{equation}
    \widehat{\bo}_t-\bo_t
    =
    \bJ_{\ell,t}\bdelta_t^{(\ell)}
    +
    \be_{\ell,t},
    \qquad
    \|\be_{\ell,t}\|_2
    \le
    \frac{L_{\ell,t}}{2}
    \|\bdelta_t^{(\ell)}\|_2^2 .
\end{equation}
For any $\rho>0$ and conformable vectors $\mathbf a$ and $\mathbf b$,
\begin{equation}
    \|\mathbf a+\mathbf b\|_2^2
    \le
    (1+\rho)\|\mathbf a\|_2^2
    +
    (1+\rho^{-1})\|\mathbf b\|_2^2 .
\end{equation}
Applying this inequality with
$\mathbf a=\bJ_{\ell,t}\bdelta_t^{(\ell)}$ and
$\mathbf b=\be_{\ell,t}$, and then taking expectation, gives
\begin{align}
    \E\!\left[
    \left\|
    \widehat{\bo}_t-\bo_t
    \right\|_2^2
    \right]
    \le&
    (1+\rho)
    \E\!\left[
    \left\|
    \bJ_{\ell,t}\bdelta_t^{(\ell)}
    \right\|_2^2
    \right] \nonumber\\
    &+
    (1+\rho^{-1})
    \E\!\left[
    \left\|
    \be_{\ell,t}
    \right\|_2^2
    \right].
\end{align}
Using \eqref{eq:hidden-error-decomposition}, the first-order term is
\begin{align}
    \E\!\left[
    \left\|
    \bJ_{\ell,t}\bdelta_t^{(\ell)}
    \right\|_2^2
    \right]
    &=
    \E\!\left[
    \left\|
    \sum_{b\in\cB_{t,\ell}}
    \bJ_{\ell,t}\mathbf G_b\br_b
    \right\|_2^2
    \right] \nonumber\\
    &=
    \sum_{b\in\cB_{t,\ell}}
    \Tr\!\left(
    \bOmega_b
    \E\!\left[
    \br_b\br_b^T
    \right]
    \right) =
    \sum_{b\in\cB_{t,\ell}}
    \omega_b\epsilon_b ,
\end{align}
where the cross terms vanish because the block errors are mutually
uncorrelated, and the last equality follows from
\eqref{eq:isotropic-resource-error} and \eqref{eq:resource-weight}. The
Taylor remainder satisfies
\begin{equation}
    \E\!\left[
    \left\|
    \be_{\ell,t}
    \right\|_2^2
    \right]
    \le
    \frac{L_{\ell,t}^2}{4}
    \E\!\left[
    \left\|
    \bdelta_t^{(\ell)}
    \right\|_2^4
    \right].
\end{equation}
Substituting the two bounds above gives
\eqref{eq:sensitivity-weighted-upper-bound}.

\section{Proof of Corollary~\ref{cor:token-preservation}}
\label{app:proof-token-preservation}
If \eqref{eq:token-preservation-condition} holds, then $ \widehat{o}_{t,y_t^\star} 
    >
    o_{t,y_t^\star}
    -
    \frac{\Gamma_t}{2}$ and $\widehat{o}_{t,j}
    <
    o_{t,j}
    +
    \frac{\Gamma_t}{2}
    \le
    o_{t,y_t^\star}
    -
    \frac{\Gamma_t}{2}$, $j\ne y_t^\star$. Thus $\widehat{o}_{t,y_t^\star}>\widehat{o}_{t,j}$ for every
$j\ne y_t^\star$, and the greedy token is unchanged. A token flip can occur
only if $\left\|\widehat{\bo}_t-\bo_t\right\|_\infty\ge\frac{\Gamma_t}{2}$.
Since $\|\cdot\|_\infty\le\|\cdot\|_2$, Markov's inequality gives
\begin{align}
    \Pr\!\left[
    \argmax_j\widehat{o}_{t,j}
    \ne
    y_t^\star
    \right]
    \le
    \Pr\!\left[
    \left\|
    \widehat{\bo}_t-\bo_t
    \right\|_2^2
    \ge
    \frac{\Gamma_t^2}{4}
    \right] \le
    \frac{4D_t}{\Gamma_t^2}.
\end{align}
This proves the result.

\section{Proof of Lemma~\ref{lem:mse-decomposition}}
\label{app:proof-mse-decomposition}
Based on the over-the-air superposition in \eqref{eq:aircomp-received},
\begin{equation}
    \bA_b^H\by_b-\bs_b
    =
    \sum_{m\in\cM_b}
    \left(
    \bA_b^H\bH_{m,b}\bF_{m,b}
    -
    \bI_r
    \right)
    \bs_{m,b}
    +
    \bA_b^H\bw_b .
\end{equation}
Taking squared norm and expectation, the cross terms vanish because the
sources are mutually uncorrelated and independent of the receiver noise.
Therefore,
\begin{align}
    \epsilon_b
    &=
    \sum_{m\in\cM_b}
    \Tr\!\Big(
    \left(
    \bA_b^H\bH_{m,b}\bF_{m,b}
    -
    \bI_r
    \right)
    \bR_{m,b}\nonumber\\
    &\quad
    \left(
    \bA_b^H\bH_{m,b}\bF_{m,b}
    -
    \bI_r
    \right)^H
    \Big)
+
    \sigma_b^2
    \Tr\!\left(
    \bA_b^H\bA_b
    \right),
\end{align}
which is equivalent to \eqref{eq:mse-decomposition}. For fixed transmit precoders, expand
\begin{equation}
    \epsilon_b
    =
    \E\!\left[
    \left\|
    \bA_b^H\by_b-\bs_b
    \right\|_2^2
    \right].
\end{equation}
This gives
\begin{align}
    \epsilon_b
    &=
    \Tr\!\left(
    \bA_b^H
    \E[\by_b\by_b^H]
    \bA_b
    \right)
    -
    2\Re\!\left\{
    \Tr\!\left(
    \bA_b^H
    \E[\by_b\bs_b^H]
    \right)
    \right\}
    +
    \E[\|\bs_b\|_2^2].
\end{align}
By the source model,
$\E[\by_b\by_b^H]=\bR_{y,b}$ and
$\E[\by_b\bs_b^H]=\bR_{ys,b}$, where
$\bR_{y,b}$ and $\bR_{ys,b}$ are defined in
\eqref{eq:Ry} and \eqref{eq:Rys}. The last term is independent of
$\bA_b$, so the fixed-precoder receive block is \eqref{prob:rx-block}.
Because $\bR_{y,b}\succ\bzero$ and $\omega_b>0$, this block is strongly
convex. Differentiating with respect to $\bA_b^\ast$ gives
\begin{equation}
    \bR_{y,b}\bA_b-\bR_{ys,b}=\bzero,
\end{equation}
whose unique solution is \eqref{eq:optimal-combiner}.

\section{Proof of Proposition~\ref{prop:precoder}}
\label{app:proof-precoder}
For fixed receive combiners, the map
\[
    \bF_{m,b}
    \mapsto
    \left(
    \bA_b^H\bH_{m,b}\bF_{m,b}
    -
    \bI_r
    \right)
    \bR_{m,b}^{1/2}
\]
is affine in $\bF_{m,b}$. Hence the squared Frobenius norm in
\eqref{prob:tx-block-obj} is convex, and the nonnegative weighted sum over
$b\in\cB_m$ remains convex. The constraint
\eqref{prob:tx-block-total} is a convex quadratic inequality because
$\bR_{m,b}\succeq\bzero$. Therefore, \eqref{prob:tx-block} is a convex
QCQP. The Lagrangian of the BS-$m$ problem is
\begin{align}
    \mathcal L_m
    =
    &\sum_{b\in\cB_m}
    \omega_b
    \left\|
    \left(
    \bA_b^H\bH_{m,b}\bF_{m,b}
    -
    \bI_r
    \right)
    \bR_{m,b}^{1/2}
    \right\|_F^2
    \nonumber\\
    &+
    \lambda_m
    \left(
    \sum_{b\in\cB_m}
    \Tr\!\left(
    \bF_{m,b}\bR_{m,b}\bF_{m,b}^H
    \right)
    -
    P_m^{\rm tot}
    \right).
\end{align}
The stationarity condition with respect to $\bF_{m,b}^{\ast}$ is
\begin{align}
    \omega_b
    \bH_{m,b}^H\bA_b
    \left(
    \bA_b^H\bH_{m,b}\bF_{m,b}
    -
    \bI_r
    \right)
    \bR_{m,b}
    +
    \lambda_m
    \bF_{m,b}\bR_{m,b}
    =
    \bzero .
    \label{eq:precoder-stationarity-app}
\end{align}
On the transmitted source subspace of $\bR_{m,b}$, rearranging
\eqref{eq:precoder-stationarity-app} gives
\begin{equation}
    \left(
    \bH_{m,b}^H\bA_b\bA_b^H\bH_{m,b}
    +
    \frac{\lambda_m}{\omega_b}
    \bI_{N_{\rm T}}
    \right)
    \bF_{m,b}
    =
    \bH_{m,b}^H\bA_b .
\end{equation}
Thus the optimal precoder has the form
\eqref{eq:optimal-precoder-weighted}. The remaining KKT conditions are
\[
    \lambda_m\ge0,\qquad
    P_m(\lambda_m)\le P_m^{\rm tot},\qquad
    \lambda_m\big(P_m(\lambda_m)-P_m^{\rm tot}\big)=0 .
\]
Therefore, if $P_m(0)\le P_m^{\rm tot}$, the precoder computed with
$\lambda_m=0$ already satisfies the BS power budget, and thus
$\lambda_m^\star=0$. If $P_m(0)>P_m^{\rm tot}$, the precoder would exceed
the BS power budget. In this case, $\lambda_m^\star$ is selected so that
$P_m(\lambda_m^\star)=P_m^{\rm tot}$. This gives
\eqref{eq:precoder-dual-selection}. Since \eqref{prob:tx-block} is convex,
these KKT conditions are sufficient for global optimality. If the matrix in \eqref{eq:optimal-precoder-weighted} is nonsingular, the
Moore--Penrose inverse is replaced by the ordinary inverse.

\section{Proof of Proposition~\ref{prop:inner-loop-convergence}}
\label{app:proof-inner-loop-convergence}
By Lemma~\ref{lem:mse-decomposition}, $J$ is the weighted sum of the block MSE in
\eqref{prob:main-obj}. For fixed $\bF^{(i)}$, the receive-combiner subproblem
decouples over $b$ and is minimized by \eqref{eq:optimal-combiner}; hence
\begin{equation}
    J(\bF^{(i)},\bA^{(i+1)})
    \le
    J(\bF^{(i)},\bA^{(i)}).
    \label{eq:proof-rx-descent}
\end{equation}
For fixed $\bA^{(i+1)}$, Proposition~\ref{prop:precoder} shows that
\eqref{eq:optimal-precoder-weighted} with $\lambda_m$ selected by
\eqref{eq:precoder-dual-selection} globally solves each separated BS transmit
subproblem, so
\begin{equation}
    J(\bF^{(i+1)},\bA^{(i+1)})
    \le
    J(\bF^{(i)},\bA^{(i+1)}).
    \label{eq:proof-tx-descent}
\end{equation}
Combining \eqref{eq:proof-rx-descent} and \eqref{eq:proof-tx-descent} gives
\eqref{eq:inner-loop-descent}. Nonnegativity and $\omega_b>0$ lower-bound the
objective by zero, so the monotone objective values converge. If an
accumulation point admitted a strict single-block decrease, the continuity of $J$
and the exact block minimizations would preserve that decrease nearby,
contradicting convergence along the subsequence. Thus every accumulation point
is block-stationary.

\bibliographystyle{IEEEtran}
\bibliography{references}

\end{document}